\declaretheoremstyle[
  headfont=\bfseries,
  bodyfont=\itshape,
  spaceabove=\topsep,
  spacebelow=\topsep,
]{custom-plain}
\declaretheoremstyle[
  headfont=\itshape,
  bodyfont=\normalfont,
]{custom-remark}
\declaretheoremstyle[
  headfont=\bfseries,
  bodyfont=\normalfont,
]{custom-definition}
\declaretheorem[
  style=custom-plain,
  name=Lemma,
]{lem}
\declaretheorem[
  style=custom-plain,
  name=Proposition,
]{prop}
\declaretheorem[
  style=custom-remark,
  name=Remark,
]{rmk}
\declaretheorem[
  style=custom-remark,
  name=Example,
]{example}
\declaretheorem[
  style=custom-definition,
  name=Definition,
]{defn}
\numberwithin{equation}{section}
\def\N{\mathbb{N}}
\def\R{\mathbb{R}}
\def\1{\mathbf{1}}
\DeclareMathOperator{\Var}{Var}
\providecommand{\keywords}[1]
{
  \small	
  \textbf{\textit{Keywords---}} #1
}
\title{Generalized Measures of Population Synchrony}
\date{}
\author[a,*]{Francis C. Motta}
\author[b]{Kevin McGoff}
\author[c]{Breschine Cummins}
\author[d]{Steven B. Haase}
\affil[a]{Department of Mathematics and Statistics, Florida Atlantic University, Boca Raton, FL, USA}
\affil[b]{Department of Mathematics and Statistics, UNC Charlotte, Charlotte, NC, USA}
\affil[c]{Department of Mathematical Sciences, Montana State University, Bozeman, MT, USA}
\affil[d]{Department of Biology, Duke University, Durham, NC, USA}
\affil[*]{Corresponding author: Francis C. Motta, fmotta@fau.edu}
\begin{document}
\maketitle


\begin{abstract}
Synchronized behavior among individuals, broadly defined, is a ubiquitous feature of populations. Understanding mechanisms of (de)synchronization demands meaningful, interpretable, computable quantifications of synchrony, relevant to measurements that can be made of complex, dynamic populations. Despite the importance to analyzing and modeling populations, existing notions of synchrony often lack rigorous definitions, may be specialized to a particular experimental system and/or measurement, or may have undesirable properties that limit their utility. Here we introduce a notion of synchrony for populations of individuals occupying a compact metric space that depends on the Fr\'{e}chet variance of the distribution of individuals across the space. We establish several fundamental and desirable mathematical properties of our proposed measure of synchrony, including continuity and invariance to metric scaling. We establish a general approximation result that controls the disparity between synchrony in the true space and the synchrony observed through a discretization of state space, as may occur when observable states are limited by measurement constraints. We develop efficient algorithms to compute synchrony for distributions in a variety of state spaces, including all finite state spaces and empirical distributions on the circle, and provide accessible implementations in an open-source Python module. To demonstrate the usefulness of the synchrony measure in biological applications, we investigate several biologically relevant models of mechanisms that can alter the dynamics of population synchrony over time, and reanalyze published experimental and model data concerning the dynamics of the intraerythrocytic developmental cycles of \textit{Plasmodium} parasites. We anticipate that the rigorous definition of population synchrony and the mathematical and biological results presented here will be broadly useful in analyzing and modeling populations in a variety of contexts.
\end{abstract}

\keywords{Population Synchrony, Fr\'{e}chet Variance, Wasserstein Distance,  Optimal Transport, Biological Clocks, \textit{Plasmodium}}

\maketitle



\section{Introduction}

Individuals within a population, whether particles, cells, insects, or humans, often exhibit some form of synchrony, and this may be advantageous. For example, synchronizing processes is essential for coordinating functions of cell populations.  In metazoans, cells of a particular organ or tissue must act together to respond to the dynamic needs of the organism.  Synchronization is also observed in single cell organisms when populations act coordinately to form structures such as biofilms \cite{RN1913, RN2212}, or to promote mating in the case of yeast cells that synchronize in G1 phase of the cell-cycle in response to mating pheromone \cite{RN2238, RN2237, RN732}.  Cells also tend to synchronize rhythmic behaviors such as circadian cycles \cite{RN2009}.  In metazoans autonomous cellular clocks synchronize within a tissue and with central clocks of the suprachiasmatic nucleus \cite{RN2229, RN2223, RN2226, RN2228, RN2232}.  In turn, the central clock synchronizes with the 24-hour period of the earth’s rotation.  Single cell organisms also exhibit circadian behavior and can synchronize to daily cycles via environmental signals such as light/dark or temperature cycles \cite{RN2236, RN2235, RN2233, RN2234}.  Relatively recent work has also shown that single-cell parasites can synchronize their developmental cycles with their host circadian cycle \cite{RN1728, Motta2023-pnas, science}, although the mechanisms are not yet understood. Additionally, synchrony of neural cells is believed to play an important role in brain function \cite{neuro2011,uhlhaas2009neural}.

Mathematical modeling is one approach to address the mechanistic questions of how cells synchronize various processes and/or entrain to environmental rhythms.  Such approaches will require rigorously defined mathematical definitions of synchrony (and asynchrony) that faithfully capture the behaviors of the synchronizing system.  Multiple definitions have been proposed, but tend not to be generalizable, or do not accurately represent important features of the system (e.g., variance).  Here we present a rigorous definition of synchrony and a generalized framework for modeling synchrony/asynchrony that should aid biologists (and others) in understanding the regulation of synchrony and entrainment processes.



The terms synchrony and synchronization appear in a variety of settings in the scientific literature. Rigorous measures of synchrony include the famous Kuramoto order parameter, developed to interrogate the Kuromoto oscillator model \cite{Kuramoto1975}, and computable from a finite collection of phases in the circle encoded as complex numbers with modulus 1. The order parameter is given by the modulus of the average of these complex numbers and itself will be 1 only if all phases are identical, indicating perfect phase coherence \cite{STROGATZ20001}. Another quantification is the synchrony measure $\chi \in [0,1]$, representing a ratio of a time-averaged variance of a population mean to the average of the time-averaged variances \cite{Golomb1994-fr, Golomb1993-ca, Tayarpnassynch}, used to model synchrony across networked compartments (e.g., cells).  In ecology, correlation or cross-correlation of time series data for population dynamics, derivatives of population dynamics, or phase is often used to assess synchrony between populations \cite{liebholdsyncheco}. In \cite{Ottesenpnassynch}, the synchrony of time series transcriptomics data between three marine microbial species was assessed by several heuristic measures: (1) computing correlation between the square root of the sum of squared differences in abundance for all transcripts at each time point, (2) computing correlation of the relative abundance of subsets of functionally related genes across time, and (3) performing Procrustes analysis (shape similarity) on the clustering patterns of the projected time series data onto first two principal components for each species.  More recently, a measure of population synchrony has been proposed for populations of \textit{Plasmodium} parasites and depends on the time taken for most of the population to traverse its intraerythrocytic cycle (IEC) \cite{Greischar2024}.  In studies of the circadian rhythm, the standard deviation of the distribution of cellular periods has also been referred to as synchrony \cite{HERZOG20153}. Finally, although not usually referred to as synchrony, in economics, the Gini coefficient (index) is a common quantification of income inequality ranging from 0 (perfect equality) to 1 (if all wealth concentrated in a single individual) \cite{gini1921}. This (and other quantifications of income inequality \cite{DeMaio2007}) may be seen to measure how synchronized a population is in the set of possible values of wealth, where perfect equality could be interpreted as perfect synchrony.


Given this vast body of work, why is another definition of synchrony useful? For one, existing measures of synchrony are specialized to specific systems and experimental modalities; e.g., the Kuramoto order parameter is appropriate only for a population of oscillators, but not for non-cyclic processes such as developmental progression. Additionally, a common property in the previously listed definitions of synchrony is that they depend on the variability of specific population dynamics, or worse, convolve population distributions across states and changes over time in population distributions across states. Such properties may reduce the utility of the measurement in modeling and analyzing dynamical processes that exhibit multiple (possibly unknown) synchronizing/de-synchronizing mechanisms.

We aim to provide a definition that is broadly applicable to many systems, models, and measurements, and that permits rigorous quantitative comparison of synchrony across experiments and time, with desirable properties that are rigorously established.  With our proposed measure, we hope to clarify issues that have arisen in the literature, e.g., the impact of replication and discrete or categorical measurements on intuitive ideas of time-dependent synchrony.

We take the view that the synchrony of a population should be a measure of the extent to which the individuals in the population occupy the same state at the same time. As such, we propose a precise mathematical function that takes the distribution of states of the population at one instant of time and returns a number between $0$ and $1$, referred to as the synchrony of the population at that time, where $1$ indicates perfect synchrony (all individuals occupy exactly the same state) and $0$ indicates that the population is as far from being perfectly synchronized as is possible. 

The proposed definition of synchrony, stated precisely in Section~\ref{sec:defining}, is based on the Fr\'{e}chet variance of the population in a metric space $(M,d)$. This definition fulfills five desirable properties, the first four of which are discussed in Section~\ref{sec:properties}. The fifth property, computability of our quantification of synchrony, depends on the choice of $(M,d)$ and is addressed through examples in Section~\ref{sec:examples} and algorithms in Section~\ref{sec:computing_asynch}. We follow with several applications of the methodology in Section~\ref{sec:results} to measurements and models of biological systems, and discuss some of the mathematical contributions of this work and the implications of this notion of synchrony to experimental sciences in Section~\ref{sec:discuss}.

\section{Defining Synchrony}
\label{sec:defining}
Consider a population of individuals distributed within some fixed state space. For example, yeast cells progressing through their cell-division cycle or malaria parasites going through the IEC may be thought of as occupying phases in the circle $S^1 \cong [0,1)$, representing the fraction of their periodic developmental cycles. A population of organisms dispersed within a habitat may be regarded as a distribution over space, while a collection of mammalian cells of different cell types within a multicelluar organism might be thought of as individuals sampled from a gene expression space. 

We suppose that at any given instant of time, each individual is in some measurable state, and we would like to quantify the amount of synchronization between individuals within this population with respect to these states. 
Together these individuals form a population distribution on the state space. A natural idea of perfect synchrony is that this population distribution has zero variance; i.e., all members of the population occupy the same state. This idea of perfect synchronization is generalizable to arbitrary metric spaces, and we therefore propose to measure the deviation from perfect synchronization using a normalized measure of distance to the nearest delta distribution, which concentrates all mass on a single state. This distance to the nearest point mass then serves as a generalized notion of the variance of the observed population distribution.

Let $(M,d)$ be a compact metric space that contains the set of possible states (i.e., the state space) and let $\mathcal{P}(M)$ be the set of Borel probability measures on $M$ endowed with the weak$^*$ topology.  At any given instant of time, the distribution of the population across the state space $M$ can be represented by a probability distribution $\pi \in \mathcal{P}(M)$. For each point $a \in M$ there is a corresponding delta distribution, $\delta_a \in \mathcal{P}(M)$, defined for any measurable $A\subset M$ by $\delta_a(A) = 1$ if $a\in A$ and 0 otherwise. To quantify synchrony, we seek a function $F : \mathcal{P}(M) \to [0,1]$, where $F(\pi)$ is interpreted as the extent to which the distribution $\pi$ deviates from a $\delta_a$, $a\in M$. 

We suggest that the synchrony function, $F$, should have the following properties: 
\begin{enumerate}
\renewcommand{\labelenumi}{(\theenumi)}
\item $F$ should depend only on the distribution of individuals in a state space.
\item $F$ should be interpretable and reliably detect perfect synchrony, achieved only by delta distributions on $M$. That is, $0 \leq F(\pi) \leq 1$, with $F(\pi)=1$ if and only if $\pi = \delta_x$ for some $x \in M$, and $F(\pi) = 0$ is guaranteed for some distribution representing a maximally asynchronous population in the chosen state space.
\item $F$ should vary continuously with its input distribution.
\item  $F$ should be broadly applicable and useful to experimental data and models involving populations.
\item $F$ should be efficiently computable for a large class of spaces.
\end{enumerate}

\noindent As we will show, the following definition of synchrony enjoys each of the above properties.

\begin{defn} Let $(M,d)$ be a compact metric space containing at least two points and $\pi \in \mathcal{P}(M)$ a probability measure on the Borel $\sigma$-algebra of $(M,d)$.  The \textbf{synchrony} of the distribution $\pi$ is defined to be  
\begin{equation}
\label{eq:synchrony}
\begin{aligned}
F_{(M,d)}(\pi) & = 1-\frac{1}{\nu_{(M,d)}} \Var_{(M,d)}(\pi)^{1/2}  \\ 
               & = 1-\frac{1}{\nu_{(M,d)}}\inf\limits_{\alpha \in M} \; \left( \int_{M} d(x,\alpha)^2 \,d\pi(x) \right)^{1/2} \\
               & = 1-\frac{1}{\nu_{(M,d)}} \inf\limits_{\alpha \in M} \; W_2(\pi, \delta_\alpha),
\end{aligned}
\end{equation}
\noindent where the normalization factor $\nu_{(M,d)}$ is given by
\begin{equation}
\begin{aligned}
\nu_{(M,d)} & := \sup_{\pi' \in \mathcal{P}(M)} \; \Var_{(M,d)}(\pi')^{1/2} \\
            & =  \sup_{\pi' \in \mathcal{P}(M)} \; \inf\limits_{\alpha \in M} \; W_2(\pi', \delta_\alpha),
\end{aligned}
\label{eq:norm_const}
\end{equation}
and $W_2$ is the Wasserstein-2 distance from optimal transport \cite{KantorovichW2metric}. We note that if $M$ contains at least two points, then $\nu_{(M,d)} >0$, and thus $F_{M,d}$ is well-defined.
 If $(M,d) = (\{x\},0)$ consists of a single point, then so will $\mathcal{P}(M) = \{\delta_x\}$, and so both $\nu_{(\{x\},0)} = 0$ and $\Var_{(\{x\},0)}(\delta_x) = 0$.  In this case we define $F_{(\{x\},0)}(\pi) = 1$. 
 In every case, the quantity 
 \[1-F(\pi) = \Var_{(M,d)}(\pi)^{1/2}/\nu_{(M,d)} = \inf\limits_{\alpha \in M} \; W_2(\pi, \delta_\alpha)/\nu_{(M,d)}\] 
 is regarded as the \textbf{asynchrony} of $\pi$.
\label{def:asynchrony}
\end{defn}

We may suppress the subscript $(M,d)$ in our notation (i.e.,  $F(\pi) = F_{(M,d)}(\pi)$, $\Var(\pi) = \Var_{(M,d)}(\pi)$, and $\nu = \nu_{(M,d)}$) when explicit reference to the metric space is unnecessary. The term $\Var(\pi)$ is known as the Fr\'{e}chet (or generalized) variance of the distribution $\pi$ \cite{Fréchet1948}, and so its square root may be regarded as a kind of generalized standard deviation. Any $\alpha \in M$ that achieves the infimum in Eq.~\eqref{eq:synchrony} is called a Fr\'{e}chet mean, or barycenter of the distribution $\pi$ \cite{Fréchet1948}. Since we assume that $(M,d)$ is compact, the infimum in Eq.~\eqref{eq:synchrony} is realized and so every distribution $\pi \in \mathcal{P}(M)$ has a Fr\'{e}chet mean (although it may not be unique). Furthermore, since $\mathcal{P}(M)$ is also compact \cite[Thm. 15.11, p. 513]{aliprantis06} and $\Var : \mathcal{P}(M) \to \R$ is continuous, the supremum in Eq.~\eqref{eq:norm_const} is also achieved. This normalization factor, $\nu$, is the square root of the maximum generalized variance of any measure supported on $(M,d)$, which we think of as corresponding to a maximally asynchronous population.

As indicated in (\ref{eq:synchrony}), our choice of synchrony can be written in terms of the Wasserstein-2 distance from optimal transport \cite{KantorovichW2metric,villani2009optimal}. In this interpretation, the asynchrony of a population distribution, $1-F(\pi)$, reflects the minimal distance between $\pi$ and any perfectly synchronized population. 

\section{Properties of Synchrony}
\label{sec:properties}
\subsection{Continuity}
\label{ssec:continuity}
By construction, the synchrony function given in Eq.~\eqref{eq:synchrony} depends only on how a population is distributed in its state space, and so property (1) is satisfied. Moreover, the following proposition gives that our notion of synchrony varies continuously with the underlying distribution (property (3)) whenever $(M,d)$ is compact. 

\begin{restatable}{prop}{continuity}
\label{prop:continuity}
Assume $(M,d)$ is a compact metric space and $\mathcal{P}(M)$ is endowed with the weak$^*$ topology. Then $F : \mathcal{P}(M) \to \R$ is continuous. 
\end{restatable}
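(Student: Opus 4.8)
The plan is to reduce the claim to the continuity of the Fréchet variance functional $\Var : \mathcal{P}(M) \to \R$. Indeed, $F = 1 - \Var^{1/2}/\nu$ is obtained from $\Var$ by composing with the (continuous) square root on $[0,\infty)$ and an affine map; since $\nu > 0$ is a fixed constant (the single-point case being trivial), continuity of $\Var$ yields continuity of $F$ at once. Because $M$ is a compact metric space, the weak$^*$ topology on $\mathcal{P}(M)$ is metrizable, so it suffices to verify \emph{sequential} continuity: given $\pi_n \to \pi$ in the weak$^*$ topology, I must show $\Var(\pi_n) \to \Var(\pi)$.

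For each $\alpha \in M$ write $g_\alpha(x) = d(x,\alpha)^2$ and $V_\alpha(\pi) = \int_M g_\alpha \, d\pi$, so that $\Var(\pi) = \inf_{\alpha \in M} V_\alpha(\pi)$. Since $M$ is compact, each $g_\alpha$ is continuous and bounded by $\diam(M)^2$, so $\pi \mapsto V_\alpha(\pi)$ is weak$^*$ continuous for every fixed $\alpha$. Upper semicontinuity of $\Var$ is then immediate, being a pointwise infimum of continuous functions: for any fixed $\alpha$ we have $\Var(\pi_n) \le V_\alpha(\pi_n) \to V_\alpha(\pi)$, and taking the infimum over $\alpha$ gives $\limsup_n \Var(\pi_n) \le \Var(\pi)$.

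The main obstacle is the matching lower bound $\liminf_n \Var(\pi_n) \ge \Var(\pi)$, since an infimum of continuous functions need not be lower semicontinuous. To control it I would exploit compactness of $M$ together with the elementary uniform estimate
\[
|g_\alpha(x) - g_\beta(x)| = |d(x,\alpha) - d(x,\beta)| \cdot |d(x,\alpha) + d(x,\beta)| \le 2\,\diam(M)\, d(\alpha,\beta),
\]
which shows the family $\{g_\alpha\}_{\alpha \in M}$ is equicontinuous in $\alpha$, uniformly in $x$. Pass to a subsequence (not relabeled) along which $\Var(\pi_n)$ converges to $\liminf_n \Var(\pi_n)$, and let $\alpha_n \in M$ be a barycenter of $\pi_n$, so $\Var(\pi_n) = V_{\alpha_n}(\pi_n)$ (such a minimizer exists by compactness of $M$). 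Refining to a further subsequence, compactness of $M$ gives $\alpha_n \to \alpha^*$. Splitting
\[
|V_{\alpha_n}(\pi_n) - V_{\alpha^*}(\pi)| \le \int_M |g_{\alpha_n} - g_{\alpha^*}| \, d\pi_n + \left| \int_M g_{\alpha^*} \, d\pi_n - \int_M g_{\alpha^*} \, d\pi \right|,
\]
the first term is bounded by $2\,\diam(M)\, d(\alpha_n,\alpha^*) \to 0$ via the estimate above, while the second tends to $0$ by weak$^*$ convergence applied to the fixed continuous function $g_{\alpha^*}$. Hence $\Var(\pi_n) = V_{\alpha_n}(\pi_n) \to V_{\alpha^*}(\pi) \ge \Var(\pi)$, so $\liminf_n \Var(\pi_n) \ge \Var(\pi)$. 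Combined with the upper bound, this proves $\Var(\pi_n) \to \Var(\pi)$ and hence the continuity of $F$.
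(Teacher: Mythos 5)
Your proposal is correct and follows essentially the same route as the paper's proof: reduce to continuity of $\Var = \inf_\alpha \int d(\cdot,\alpha)^2\,d\pi$, get upper semicontinuity from the pointwise infimum of weak$^*$-continuous functionals, and obtain the matching lower bound by extracting a convergent subsequence of barycenters and using the uniform (in $x$) convergence of $d(\cdot,\alpha_n)^2$ to $d(\cdot,\alpha^*)^2$ together with weak$^*$ convergence. The only cosmetic difference is bookkeeping ($\liminf$/$\limsup$ directly versus the paper's subsequence-of-subsequence argument), and you make explicit the Lipschitz estimate that the paper only asserts.
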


For a proof of Proposition \ref{prop:continuity}, see Appendix \ref{app:proof_continuity}. 
This proposition has several interpretations and implications. First, at a conceptual level, it states that small changes in the population will result in correspondingly small changes to the synchrony measure $F$. Furthermore, if the population distribution evolves continuously over time, then the resulting synchrony measure $F(\pi_t)$ will be a continuous function of time, and thus the synchrony measure $F$ can be meaningfully measured over a population whose distribution changes continuously in time. Lastly, the continuity of $F$ guarantees a type of consistency property with respect to statistical estimation and measurement noise: if $\{\hat{\pi}_n\}_{n=1}^{\infty}$ is a sequence of estimates of $\pi$ that converges to $\pi$ in $\mathcal{P}(M)$,  e.g., resulting from finite samples of the population with potentially noisy observations, then the synchrony measure satisfies $F(\hat{\pi}_n) \rightarrow F(\pi)$. This provides reassurance that the synchrony of a large finite sample approximates the synchrony of the full population.
\subsection{Interpretability}
\label{ssec:interpretability}
As a consequence of the continuity of $\Var$ and compactness of $\mathcal{P}(M)$, the generalized variances of distributions over $\mathcal{P}(M)$ are bounded and a there exists a maximizer $\pi^* \in \mathcal{P}(M)$ in  Eq.~\eqref{eq:norm_const}. We regard such a distribution on $M$ as maximally asynchronous, and thus $F$ satisfies property (2) and provides an interpretable measure of synchrony, where $F(\pi) = 1$ if and only if the population is perfectly synchronized in some state, i.e., $\pi = \delta_{\alpha}$ for some $\alpha \in M$. This follows from the elementary fact that the only distributions with 0 generalized variance are delta distributions supported on a single state:
\begin{restatable}{prop}{vardelta}
\label{prop:vardelta}
Let $(M,d)$ be a compact metric space, and $\pi \in \mathcal{P}(M)$. Then $\Var(\pi)=0$ if and only if $\pi = \delta_{\alpha}$ is a delta distribution supported at some $\alpha \in M$.
\end{restatable}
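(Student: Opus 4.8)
The plan is to prove the two implications separately, with the reverse direction being essentially immediate and the forward direction resting on a single standard fact from measure theory. Throughout I use that $\Var(\pi) = \inf_{\beta \in M} \int_M d(x,\beta)^2 \, d\pi(x)$, which is the (squared) quantity appearing in Definition~\ref{def:asynchrony}.

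For the reverse implication, suppose $\pi = \delta_\alpha$. Since $\Var(\pi)$ is an infimum over candidate points $\beta \in M$, I would simply use $\beta = \alpha$ as a test point: then $\int_M d(x,\alpha)^2 \, d\delta_\alpha(x) = d(\alpha,\alpha)^2 = 0$, so the infimum of this nonnegative quantity is squeezed to $0$, giving $\Var(\delta_\alpha) = 0$.

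For the forward implication, suppose $\Var(\pi) = 0$. Because $(M,d)$ is compact, the map $\beta \mapsto \int_M d(x,\beta)^2 \, d\pi(x)$ attains its infimum at some Fr\'echet mean $\alpha^* \in M$ (as already noted in the discussion following Definition~\ref{def:asynchrony}). Hence $\int_M d(x,\alpha^*)^2 \, d\pi(x) = 0$. The integrand $x \mapsto d(x,\alpha^*)^2$ is nonnegative and measurable, so I would invoke the elementary fact that a nonnegative measurable function with vanishing integral equals $0$ for $\pi$-almost every $x$. Thus $\pi\bigl(\{x : d(x,\alpha^*) > 0\}\bigr) = 0$. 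Since $d$ is a metric, $d(x,\alpha^*) = 0$ holds precisely when $x = \alpha^*$, so the complementary set is $\{\alpha^*\}$, and therefore $\pi(\{\alpha^*\}) = 1$, i.e., $\pi = \delta_{\alpha^*}$.

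The argument is short, and the only step requiring care is the passage from a zero integral to a point mass. The two ingredients that make it work — compactness of $M$ to guarantee the infimum is realized, and the separation property of the metric ($d(x,\alpha^*)=0 \iff x = \alpha^*$) — are exactly what prevent pathologies such as mass spreading over a set of distinct points all somehow at distance zero. I do not anticipate any genuine obstacle; this result is the measure-theoretic backbone justifying that $F(\pi)=1$ detects perfect synchrony, and it slots directly into the verification of property~(2).
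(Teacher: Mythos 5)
Your proposal is correct and follows essentially the same route as the paper: both directions hinge on using $\beta=\alpha$ as a test point for the reverse implication, and on compactness producing a Fr\'echet mean $\alpha^*$ for the forward one. The only cosmetic difference is that the paper argues the forward direction by contrapositive and proves the ``nonnegative function with zero integral vanishes $\pi$-a.e.'' fact inline (via the sets $U_n=\{x: d(x,\alpha^*)^2>1/n\}$), whereas you cite that standard fact directly.
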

\noindent For completeness we include a proof of this fact in Appendix~\ref{app:vardelta}. 

The normalization of $F$ to the interval $[0,1]$ permits quantitative comparison of synchrony across observations of systems under differing conditions. It also ensures that the measure is invariant to rescaling the metric, enabling, for example, comparison of periodic processes with different periods.

\subsection{Invariance to Rescaling}
\label{ssec:scaleinvariant}
The measure of synchrony given in Eq.~\eqref{eq:synchrony} depends explicitly on the choice of state space and metric on it. While in practice we expect the states to be determined by the available measurements, the metric on those states may be subject to greater choice. For example, if the state space is a circle, then one may choose to parameterize it by radians in $[0,2\pi)$, resulting in distances between points in $[0,\pi]$, or by fractions of a cycle, i.e., by $[0,1)$, resulting in distances between points being confined to $[0,1/2]$. Such a linear reparameterization should not change the synchrony of a distribution, and indeed it does not. This can be seen as a special case of rescaling a metric on a common space, which leaves synchrony unchanged:
\begin{restatable}{prop}{metricrescale}
\label{prop:metric_rescale}
Let $(M, d)$ be a metric space. For $\lambda >0$,  define $d_\lambda(x,y) = \lambda d(x,y)$ for all $x,y\in M$. Then $F_{(M,d)}(\pi) = F_{(M,d_\lambda)}(\pi)$ for all $\pi \in \mathcal{P}(M)$. 
\end{restatable}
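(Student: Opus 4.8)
The plan is to track precisely how each ingredient in the definition of $F$ transforms under the rescaling $d \mapsto d_\lambda = \lambda d$ and then observe that the powers of $\lambda$ cancel. The key structural fact is that the Fréchet variance is built from an integral of the \emph{squared} distance, so it scales quadratically with the metric, while the normalization factor $\nu$, being a supremum of square roots of such variances, scales linearly; dividing one by the other therefore removes all dependence on $\lambda$.

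Concretely, I would first record that for every $\alpha \in M$ and every $\pi \in \mathcal{P}(M)$,
\[
\int_M d_\lambda(x,\alpha)^2 \, d\pi(x) = \lambda^2 \int_M d(x,\alpha)^2 \, d\pi(x),
\]
directly from $d_\lambda = \lambda d$. Taking the infimum over $\alpha \in M$ and using that $\lambda^2 > 0$ is a constant that factors out of the infimum gives $\Var_{(M,d_\lambda)}(\pi) = \lambda^2 \Var_{(M,d)}(\pi)$, hence $\Var_{(M,d_\lambda)}(\pi)^{1/2} = \lambda \, \Var_{(M,d)}(\pi)^{1/2}$. Applying the identical reasoning to the supremum in Eq.~\eqref{eq:norm_const} yields $\nu_{(M,d_\lambda)} = \lambda \, \nu_{(M,d)}$. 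Substituting both relations into Eq.~\eqref{eq:synchrony} makes the factor $\lambda$ in the numerator cancel the factor $\lambda$ in $\nu_{(M,d_\lambda)}$, producing $F_{(M,d_\lambda)}(\pi) = F_{(M,d)}(\pi)$.

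The remaining points are bookkeeping rather than genuine difficulty. I would first note that $d_\lambda$ is again a metric and, because $\lambda > 0$, that $d$ and $d_\lambda$ are strongly equivalent and induce the same topology; consequently $(M,d_\lambda)$ is compact, its Borel $\sigma$-algebra and the weak$^*$ space $\mathcal{P}(M)$ are unchanged, and $(M,d_\lambda)$ contains at least two points exactly when $(M,d)$ does. This ensures $\nu_{(M,d_\lambda)} > 0$ whenever $\nu_{(M,d)} > 0$, so the quotient is well defined and the cancellation above is legitimate; the degenerate single-point case is covered directly by the convention $F_{(\{x\},0)} \equiv 1$ in Definition~\ref{def:asynchrony}, which is visibly independent of any rescaling. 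I anticipate no substantive obstacle: the content of the statement is simply that $\Var^{1/2}$ is positively homogeneous of degree one in the metric, and normalizing by its supremum over $\mathcal{P}(M)$ exactly cancels that homogeneity.
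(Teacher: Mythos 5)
Your proposal is correct and follows essentially the same route as the paper's proof: observe that the topologies (hence Borel $\sigma$-algebras and $\mathcal{P}(M)$) are unchanged, that $\Var_{(M,d_\lambda)}(\pi) = \lambda^2 \Var_{(M,d)}(\pi)$, hence $\nu_{(M,d_\lambda)} = \lambda\,\nu_{(M,d)}$, and the factors of $\lambda$ cancel in the quotient. Your extra remarks on the single-point case and the positivity of $\nu$ are harmless additions the paper leaves implicit.
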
 
\noindent An elementary proof of the fact that synchrony is invariant to rescaling the metric is provided in Appendix~\ref{app:metric_rescale}.

\subsection{Approximability and Quantization of States}
\label{ssec:Approximability}
In practice, it might be impossible to measure the true states of a system exactly. Instead, due to experimental capabilities or other factors, it may be that the true states are only \textit{approximated} by a measurement. For instance, a state space may be discretized into a finite number of discrete states (e.g., a spatial grid), or an individual's state may be represented by an observable proxy (e.g., cell cycle stage coarsely characterized by cellular morphology). In this section we suppose that $(M,d)$ is the true underlying state space of a population distributed according to $\pi \in \mathcal{P}(M)$, and furthermore the population is only observed through a measurable function $g : M \to X \subset M$, where $X$ is the set of observable states.  In other words, if an individual is in state $x \in M$, then when we measure that individual our measurement/experiment  returns only the value $g(x) \in X$. Although not required, in practice we often expect $X$ to be finite. 

If we observe a population using the function $g : M \to X$, then the distribution of the observations is given by the push-forward measure $\pi_X := g_* \pi = \pi \circ g^{-1}$, defined by $\pi_X(E) := \pi(g^{-1}(E))$ for all measurable sets $E \subset X$. If $(X,d)$ is regarded as a subspace of $(M,d)$ then we can assess the synchrony of the population in the observed state space, $ F_{(X,d)}(\pi_X)$, although we do not have access to the true population synchrony $ F_{(M,d)}(\pi)$. To what extent does the observable synchrony in $X$ approximate the true underlying synchrony in $M$? Not surprisingly, the answer depends on how well $X$ approximates $M$, a fact which is made precise in the following proposition.

\begin{restatable}{prop}{approximability}
\label{prop:approximability}
 Let $(M,d)$ be a compact metric space with at least two points, $X \subset M$ compact, and $g: M \to X$ measurable. Let $\epsilon \geq 0$ and assume $d(x,g(x)) \leq \epsilon$ for all $x\in M$. Let $\pi \in \mathcal{P}(M)$ and take $\pi_X := g_*\pi \in \mathcal{P}(X)$. Then
    \begin{equation*}
        \left| F_{(M,d)}(\pi) - F_{(X,d)}(\pi_X) \right| \leq \frac{3\epsilon}{\nu_{(M,d)}},
    \end{equation*}
    \noindent where $\nu_{(M,d)}$ is the synchrony normalization constant of $(M,d)$.
\end{restatable}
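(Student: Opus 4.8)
The plan is to control the difference of the two synchronies by separately estimating how far apart the two (square-root) Fréchet variances are and how far apart the two normalization constants are, and then combining these with an algebraic identity chosen so that the surviving denominator is $\nu_{(M,d)}$. Writing $a := \Var_{(M,d)}(\pi)^{1/2}$, $b := \Var_{(X,d)}(\pi_X)^{1/2}$, $\nu_M := \nu_{(M,d)}$, and $\nu_X := \nu_{(X,d)}$, the quantity to bound is $|a/\nu_M - b/\nu_X|$, since $1 - F = \Var^{1/2}/\nu$ in each space.

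The first ingredient is that $g$ displaces mass by at most $\epsilon$, so $W_2(\pi,\pi_X) \le \epsilon$: the deterministic coupling $(\mathrm{id},g)_*\pi$ transports $\pi$ to $\pi_X$ at cost $\int_M d(x,g(x))^2\,d\pi(x) \le \epsilon^2$. The second ingredient is that the map $\sigma \mapsto \inf_{\alpha \in M} W_2(\sigma,\delta_\alpha)$ is $1$-Lipschitz in $W_2$, being an infimum of the $1$-Lipschitz functions $\sigma \mapsto W_2(\sigma,\delta_\alpha)$ (reverse triangle inequality). The third ingredient compares the Fréchet variance of a measure $\rho \in \mathcal{P}(X)$ with barycenter ranging over $X$ versus over $M$: restricting the infimum to $X \subseteq M$ gives $\Var_{(M,d)}(\rho)^{1/2} \le \Var_{(X,d)}(\rho)^{1/2}$, while if $\alpha^* \in M$ is an $M$-barycenter of $\rho$ then $g(\alpha^*) \in X$ with $d(\alpha^*,g(\alpha^*)) \le \epsilon$, so $\Var_{(X,d)}(\rho)^{1/2} \le W_2(\rho,\delta_{g(\alpha^*)}) \le \Var_{(M,d)}(\rho)^{1/2} + \epsilon$. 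Applying the first two ingredients with $\sigma = \pi$ and the third with $\rho = \pi_X$ yields $|a - b| \le 2\epsilon$.

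Next I would bound $|\nu_M - \nu_X| \le \epsilon$ with the same estimates. For the upper bound, every $\rho \in \mathcal{P}(X)$ satisfies $\Var_{(X,d)}(\rho)^{1/2} \le \Var_{(M,d)}(\rho)^{1/2} + \epsilon \le \nu_M + \epsilon$, and taking the supremum gives $\nu_X \le \nu_M + \epsilon$. For the lower bound, take an $M$-maximizer $\pi^*$ with $\Var_{(M,d)}(\pi^*)^{1/2} = \nu_M$ (which exists since $\mathcal{P}(M)$ is compact and $\Var$ is continuous); pushing forward, $g_*\pi^* \in \mathcal{P}(X)$ and $\Var_{(X,d)}(g_*\pi^*)^{1/2} \ge \Var_{(M,d)}(g_*\pi^*)^{1/2} \ge \nu_M - \epsilon$ by the infimum-restriction inequality together with the Lipschitz/coupling estimate, so $\nu_X \ge \nu_M - \epsilon$.

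Finally I would combine these via the identity
\[
\frac{b}{\nu_X} - \frac{a}{\nu_M} = b\left(\frac{1}{\nu_X} - \frac{1}{\nu_M}\right) + \frac{b-a}{\nu_M}.
\]
Since $b \le \nu_X$ by definition of $\nu_X$, the first term is bounded by $b\,|\nu_M - \nu_X|/(\nu_X\nu_M) \le |\nu_M - \nu_X|/\nu_M \le \epsilon/\nu_M$, while the second is at most $2\epsilon/\nu_M$; summing gives $|F_{(M,d)}(\pi) - F_{(X,d)}(\pi_X)| \le 3\epsilon/\nu_M$, with the $\epsilon = 0$ case immediate. I expect the main obstacle to be the bookkeeping forced by the two distinct normalizations $\nu_M \ne \nu_X$: both the constant $3$ (rather than a naive $4$) and, more importantly, the appearance of $\nu_{(M,d)}$ rather than $\nu_{(X,d)}$ in the denominator hinge on two choices — proving the \emph{tight} bound $|\nu_M - \nu_X| \le \epsilon$ via the infimum-restriction inequality (a cruder double-triangle estimate would give $2\epsilon$ and degrade the final constant), and using $b \le \nu_X$ in the decomposition above to cancel $\nu_X$ from the denominator, leaving $\nu_{(M,d)}$ alone.
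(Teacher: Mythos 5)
Your argument is essentially the paper's proof: the same graph-of-$g$ coupling gives $W_2(\pi,\pi_X)\le\epsilon$, the same intermediate bounds $|\Var_{(M,d)}(\pi)^{1/2}-\Var_{(X,d)}(\pi_X)^{1/2}|\le 2\epsilon$ and $|\nu_{(M,d)}-\nu_{(X,d)}|\le\epsilon$ are established (the paper writes them as direct chains of triangle inequalities rather than via your Lipschitz-plus-infimum-restriction packaging, but the content is identical), and the final algebraic decomposition with the cancellation via $\Var_{(X,d)}(\pi_X)^{1/2}\le\nu_{(X,d)}$ is the same. The one omission is the degenerate case where $X$ is a single point, so that $\nu_{(X,d)}=0$ and your decomposition divides by zero; the paper handles this separately by noting that $F_{(X,d)}(\pi_X)=1$ by convention and that every point of $M$ is then within $\epsilon$ of the unique observable state, forcing $\nu_{(M,d)}\le\epsilon$ and hence $\left|F_{(M,d)}(\pi)-1\right|\le 1\le \epsilon/\nu_{(M,d)}\le 3\epsilon/\nu_{(M,d)}$.
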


A proof of Proposition~\ref{prop:approximability} appears in Appendix~\ref{app:approximability}. The function $g$ in Proposition~\ref{prop:approximability} serves a dual role. First, it encodes the relationship between the true states and the observed states, indicating the measured state $g(x)$ corresponding to each true state $x$. Second, it satisfies a bound on the largest distance between any state and its observed state, here controlled by $\epsilon$. Thus Proposition~\ref{prop:approximability} provides precise control on how far away an observed measure of synchrony can be from the true population synchrony in the unobserved state space, provided that one knows how well $X$ approximates $M$. Another consequence of Proposition~\ref{prop:approximability} is that the true synchrony of a population can be approximated arbitrarily well by a judicious choice of subspace that well-approximates $M$, in the sense that each point $x \in M$ is within a distance $\epsilon$ of at least one measurable state, namely $g(x) \in X$. 

Although the statement of Proposition~\ref{prop:approximability} does not specify it, nor does the proof rely on it, we imagine that in practice $X$ will typically be equal to a finite set $\{x_1, \ldots, x_n\} \subset M$, induced by discretization of the true state space, and the function $g: M \to X$ will send each state to its nearest observable state, i.e., $g(x) = x_i$ if $d(x,x_i) < d(x,x_j)$ for all $j\neq i$. For those states that are equally far from multiple observable states, some choice of a tie-breaking rule would need to be made to ensure $g$ is a measurable function. However, the conclusion of Proposition~\ref{prop:approximability} does not depend on this choice. 

Because $F$ can be well-approximated both by incomplete observations about the population (i.e., through empirical samples of the underlying population distribution, due to continuity) and incomplete observations of the true states of the system, we expect this notion of synchrony will be broadly useful in the context of experimental data which may involve uncertainties in both facets, i.e., it satisfies property (4).
 
\section{Example State Spaces}
\label{sec:examples}
In this section we contextualize the definition of population synchrony and its mathematical properties by describing several examples which may be particularly relevant in applications. Then in Section~\ref{sec:computing_asynch} we establish results that allow efficient computation of synchrony in several important spaces (property (5)). 

To compute the synchrony of any distribution $\pi \in \mathcal{P}(M)$ according to Eq.~\eqref{eq:synchrony}, one must determine the maximum asynchrony of any distribution on $M$. This essentially reduces to maximizing the generalized variance, and so we focus first on example spaces whose maximal generalized variance is known.

\begin{example}[Compact Intervals]
If $M = [a,b] \subset \mathbb{R}$, with the usual metric, $d(x,y)=|x-y|$, then  
\[\Var(\pi) \leq \frac{1}{4}(b-a)^2, \]
with equality only at $\pi^* = \frac{1}{2}(\delta_a + \delta_b)$ \cite{popoviciu1935equations}.

 Notably, the uniform distribution is not maximally asynchronous over an interval. A straightforward calculation shows that the uniform distribution on $[a,b]$, $\pi$, has generalized variance $\Var(\pi) = (b-a)^2/12$. Therefore the synchrony of the uniform distribution is $F(\pi) = (\sqrt{3}-1)/3$, independent of the particular interval $[a,b]$. This can again be seen as a special case of Proposition~\ref{prop:metric_rescale}, where, in this case, normalization ensures synchrony is independent of the scale of the interval.
\label{exp:compact_interval}
\end{example}



%
 %
    
\begin{example}[Compact Subsets of $\mathbb{R}^n$]
\label{exp:compact_subset_Rn}
Recent results generalize Example~\ref{exp:compact_interval} to subsets of $\mathbb{R}^n$ with the usual metric \cite{LimMcCann2022} and more general metric spaces \cite{Pass2022}. Even in the Euclidean setting, determination of the maximal generalized variance supported on a compact subset needs to be handled on a case by case basis, with notable results for some special cases (see \cite[Exp. 1.5]{LimMcCann2022}). For instance, generalizing Example \ref{exp:compact_interval} slightly, for $M = [a,b]^n$ with the Euclidean metric, the maximal variance will be $n(b-a)^2/4$, achieved by concentrating equal mass on the $2^n$ corners of $M$. 


\end{example}

\begin{example}[The Circle]
Suppose $M = [0,1)\cong S^1$ is the circle with circumference 1, with arclength metric 
\begin{equation*}
d(x,y) = \min( |x-y|, 1-|x - y|). 
\end{equation*} Then the following fact appears as a special case in \cite[Exp. 9]{Pass2022}.
\begin{prop}
\label{prop:circle_var_max}
Suppose $(S^{1},d)$ is the circle and $\pi^*$ is the uniform distribution (Lebesgue measure) on $S^{1}$. Then for all $\pi \in \mathcal{P}(S^{1})$, we have
\begin{equation*}
\Var\;(\pi) \leq \Var\;(\pi^*),
\end{equation*}
with equality if and only if $\pi = \pi^*$. 
\end{prop}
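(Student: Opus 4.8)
The plan is to establish the inequality by a soft averaging argument and then pin down the equality case via Fourier analysis on the circle. Throughout, write $V_\pi(\alpha) = \int_{S^1} d(x,\alpha)^2 \, d\pi(x)$, so that $\Var(\pi) = \inf_{\alpha \in S^1} V_\pi(\alpha)$. First I would record the value at the uniform measure: since Lebesgue measure $\pi^*$ is rotation invariant, $V_{\pi^*}$ is constant in $\alpha$, and evaluating at $\alpha = 0$ gives $V_{\pi^*}(0) = \int_0^1 \min(t,1-t)^2 \, dt = 2\int_0^{1/2} t^2 \, dt = \tfrac{1}{12}$. Hence $\Var(\pi^*) = \tfrac{1}{12}$.

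For the inequality, I would bound the infimum by an average against Lebesgue measure. Since $d\alpha$ is itself a probability measure on $S^1$,
\[
\Var(\pi) = \inf_{\alpha \in S^1} V_\pi(\alpha) \le \int_{S^1} V_\pi(\alpha)\, d\alpha = \int_{S^1} \int_{S^1} d(x,\alpha)^2 \, d\pi(x)\, d\alpha,
\]
and Fubini's theorem (the integrand is bounded by $\tfrac14$) lets me swap the order of integration. For each fixed $x$, translation invariance of Lebesgue measure gives $\int_{S^1} d(x,\alpha)^2\, d\alpha = \tfrac{1}{12}$, exactly the computation above. Integrating against $\pi$ then yields $\Var(\pi) \le \tfrac{1}{12} = \Var(\pi^*)$.

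For equality, I would first note that $V_\pi$ is continuous on the compact space $S^1$ (dominated convergence, since $\alpha \mapsto d(x,\alpha)^2$ is continuous and uniformly bounded). Because $\int_{S^1} V_\pi\, d\alpha = \tfrac{1}{12}$ always, equality $\Var(\pi) = \tfrac{1}{12}$ holds if and only if the infimum of $V_\pi$ equals its Lebesgue average, which for a continuous function on $S^1$ forces $V_\pi \equiv \tfrac{1}{12}$ to be constant. It then remains to show that the only $\pi$ with $V_\pi$ constant is $\pi^*$. Observing that $V_\pi = h * \pi$ is a circular convolution with the even kernel $h(t) = d(t,0)^2$, I would pass to Fourier series: $\widehat{V_\pi}(k) = \hat h(k)\,\hat\pi(k)$, so $V_\pi$ is constant precisely when $\hat h(k)\hat\pi(k) = 0$ for every $k \ne 0$. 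A direct computation gives $\hat h(k) = (-1)^k/(2\pi^2 k^2)$, which is nonzero for all $k \ne 0$; hence $\hat\pi(k) = 0$ for all $k \ne 0$, and by uniqueness of Fourier coefficients for finite measures this forces $\pi = \pi^*$.

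The main obstacle is this last step: showing that $V_\pi$ constant implies $\pi$ uniform. The inequality itself is purely soft (infimum below average, plus Fubini), but the equality case hinges on the fact that every nonzero Fourier coefficient of the squared-distance kernel $h$ is itself nonzero, so that convolution with $h$ is injective modulo constants. Verifying $\hat h(k) \ne 0$ for all $k \ne 0$ is therefore the crux of the argument, as it is precisely what rules out any non-uniform maximizer.
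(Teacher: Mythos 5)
Your proof is correct. Note, however, that the paper does not actually prove this proposition: it is imported wholesale as a special case of \cite[Exp.~9]{Pass2022}, whose argument sits inside a general framework relating variance maximization to multi-marginal optimal transport and generalized barycenters. What you have written is therefore a genuinely different, self-contained, and more elementary route. Your inequality step (bound the infimum of $V_\pi(\alpha)=\int d(x,\alpha)^2\,d\pi(x)$ by its Lebesgue average and apply Fubini plus rotation invariance) is exactly the continuous analogue of the argument the paper \emph{does} give for the discrete circle in Appendix~E (Lemmas on circulant row sums plus ``if not all coordinates are equal, one lies below the mean''), so that part aligns in spirit with the authors' own technique. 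Where you go further is the equality case: reducing it to the injectivity, modulo constants, of convolution with the kernel $h(t)=d(t,0)^2$, and verifying $\hat h(k)=(-1)^k/(2\pi^2k^2)\neq 0$ for $k\neq 0$. This is precisely the point the paper's own Remark after Proposition~\ref{prop:finite_cyclic_var_max} flags as delicate in the discrete setting, where uniqueness of the maximizer can fail (their $4$-state example has a singular circulant cost matrix); your Fourier computation shows the continuous kernel has no vanishing nonzero frequency, which is what rescues uniqueness on $S^1$. The computations check out: $\Var(\pi^*)=\int_0^1\min(t,1-t)^2\,dt=1/12$, the Lipschitz continuity of $V_\pi$ justifies upgrading ``constant a.e.''\ to ``constant everywhere,'' and uniqueness of Fourier--Stieltjes coefficients for finite Borel measures closes the argument. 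Your approach buys a short, verifiable, citation-free proof with an explicit uniqueness mechanism; the paper's citation buys generality (Pass's result covers broader classes of metric spaces) at the cost of opacity here.
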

In the present setting of population synchrony, Proposition \ref{prop:circle_var_max} establishes that the most asynchronous population on a circle is one which is uniformly distributed. A calculation then shows that the maximum generalized variance of any distribution on the circle of circumference one is $\int_{S^1} d(x,0)^2 \; d\pi(x) = 1/12$, and so $\nu_{(S^1,d)} = \sqrt{3}/6$.  
\end{example}

\begin{example}[Discrete Circles (Uniform)]
\label{exp:discrete_circle_uniform}
Let $S^{1}_p = \{0,\dots,p-1\}$, for some $p \in \N$, and define 
\begin{equation*}
d_p(x,y) = \min(|x-y|/p, 1 - |x-y|/p),
\end{equation*}
for each $x,y \in S^{1}_p$.
We think of $(S^{1}_p,d_p)$ as a discrete circle with $p$ states. An analogous result to Proposition \ref{prop:circle_var_max} holds for discrete circles.
\begin{restatable}{prop}{discretecircle}
\label{prop:finite_cyclic_var_max}
Suppose $(S^{1}_p,d_p)$ is a discrete circle, and let $\pi^*$ be the uniform distribution on $S^{1}_p$. Then for all $\pi \in \mathcal{P}(S^{1}_p)$
\begin{equation*}
\Var\;(\pi) \leq \Var\;(\pi^*).
\end{equation*}
\end{restatable}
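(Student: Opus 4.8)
The plan is to exploit the translation symmetry of the discrete circle together with the fact that the Fr\'echet variance is an infimum (here a genuine minimum over the $p$ candidate barycenters) and is therefore bounded above by the \emph{average} of its integrand over all candidate barycenters. Concretely, for a fixed distribution $\pi$ and a point $\alpha \in S^{1}_p$, I would write $V(\pi,\alpha) = \sum_{x \in S^{1}_p} \pi(\{x\})\, d_p(x,\alpha)^2$, so that $\Var(\pi) = \min_{\alpha} V(\pi,\alpha)$. Since the minimum of finitely many numbers is at most their average, the first step records the inequality
\[
\Var(\pi) \le \frac{1}{p}\sum_{\alpha=0}^{p-1} V(\pi,\alpha).
\]

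Next I would interchange the two finite sums and invoke translation invariance of $d_p$. Because $d_p(x,\alpha) = d_p\big(0,\,(\alpha - x)\bmod p\big)$, as $\alpha$ ranges over $S^{1}_p$ so does $(\alpha-x)\bmod p$, and hence $\sum_{\alpha} d_p(x,\alpha)^2$ equals a constant $C := \sum_{\beta=0}^{p-1} d_p(0,\beta)^2$ independent of $x$. This gives
\[
\frac{1}{p}\sum_{\alpha} V(\pi,\alpha) = \frac{1}{p}\sum_{x}\pi(\{x\})\sum_{\alpha} d_p(x,\alpha)^2 = \frac{C}{p}\sum_{x}\pi(\{x\}) = \frac{C}{p}.
\]
Finally I would identify $C/p$ with $\Var(\pi^*)$: for the uniform distribution, $V(\pi^*,\alpha) = \tfrac1p\sum_x d_p(x,\alpha)^2 = C/p$ is the \emph{same} for every $\alpha$ by translation invariance, so $\Var(\pi^*) = \min_\alpha V(\pi^*,\alpha) = C/p$. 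Chaining the three displays yields $\Var(\pi) \le \Var(\pi^*)$.

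I do not anticipate a serious obstacle; the only points needing care are the elementary ``minimum $\le$ average'' step and the bookkeeping that $\sum_\alpha d_p(x,\alpha)^2$ is independent of $x$, which is exactly the rotation invariance of the discrete arclength metric. An equivalent but more conceptual route would observe that $\pi \mapsto \Var(\pi)$ is concave (being an infimum of the linear functionals $\pi \mapsto V(\pi,\alpha)$) and rotation invariant, while $\pi^{*} = \tfrac1p\sum_{k}(\rho_k)_{*}\pi$ is the rotation-average of any $\pi$, where $\rho_k$ denotes rotation by $k$; Jensen's inequality then gives $\Var(\pi^{*}) \ge \tfrac1p\sum_k \Var((\rho_k)_{*}\pi) = \Var(\pi)$. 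Either argument suffices, and notably neither relies on the continuous-circle result of Proposition~\ref{prop:circle_var_max}.
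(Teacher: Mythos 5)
Your proposal is correct and is essentially the paper's own argument in different notation: the paper phrases it via the circulant cost matrix $\bm{C}$, whose constant row/column sums (its Lemma~\ref{lem:const_circ_sum}) are exactly your observation that $\sum_{\alpha} d_p(x,\alpha)^2$ is independent of $x$, and whose Lemma~\ref{lem:smaller_min} is exactly your ``minimum $\le$ average'' step. The computation $\Var(\pi^*)=C/p$ from the constancy of $\bm{C}\pi^*$ is likewise identical, so no further comparison is needed.
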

\noindent As a consequence of Proposition~\ref{prop:finite_cyclic_var_max} (proved in Appendix~\ref{app:finite_cyclic_var_max}) it follows that,
$$
\begin{aligned}
\nu_{(S^{1}_p,d_p)}^2 & =  \frac{1}{p}\sum _{i=0}^{p-1} \min \{i/p,1-i/p\}^2 \\
  & = \frac{6 \left\lceil \frac{p}{2}\right\rceil^2-6 p \left\lceil \frac{p}{2}\right\rceil-6 \left\lceil \frac{p}{2}\right\rceil +2 p^2+3 p+1}{6 p^2}.
\end{aligned}
$$

We may regard $(S^1_p, d_p)$ as a discretization of $S^1$ into $p$ states representing arcs of length $1/p$, with the distances between states the arclengths between the midpoints of each pair of arcs. In other words, we approximate the circle by a subset of $p$ equally-spaced points, $x_i = i/p$, $i=0,\ldots, p-1$.  If $g: S^1 \to S^1_p$ assigns elements in each (half-open) arc to its midpoint, i.e.,  $g([x_i-1/p,x_i+1/p)) = \{x_i\}$, (and points equally-far from $x_i$ and $x_{i+1}$ to one or the other), then $d(x,g(x)) \leq 1/2p$ for all $x\in S^1$. Therefore, Proposition~\ref{prop:approximability} yields 
$$
\left| F_{(S^1,d)}(\pi) - F_{(S^1_p,d_p)}(g_* \pi) \right| < \frac{3\sqrt{3}}{p}.
$$
Such an approximation may occur when a periodic process, whose true state space may be modeled as phases on a circle, is replaced by an approximation into finitely many discrete \textit{stages} representing intervals of phases. This is often done in experimental biology studies involving periodic phenomenon (e.g., cell division cycle, developmental cycles, or circadian cycles) \cite{RN1840, RN1452, science}. 
\end{example}

\begin{example}[Discrete Circles (General)]
\label{exp:discrete_circle_general}
In applications it may be more accurate to model states as arcs with \textit{different} ``stage-dependent'' lengths, still taking the distance between states as the arc length between the corresponding midpoints of the arcs they represent. 
Suppose $M = \{0,\ldots, p-1\}$ for some $p \in \mathbb{N}$, and partition $[0,1)$ into disjoint arcs $[a_i,a_{i+1}), i=0,\ldots,p-1$, where $a_0=0$ and $a_{p}=1$. Then the distance between states $x,y$ is
\begin{equation}
d(x,y) = \min(|a_x - a_y+a_{x+1}-a_{y+1}|, 2-|a_x - a_y + a_{x+1}-a_{y+1}|)/2.
\label{eq:discrete_circle_general}
\end{equation}
Such spaces may be appropriate to model periodic developmental cycles represented by discrete morphological stages that occupy different fractions of the cycle, as observed through microscopy measurements of cell populations \cite{science}. Unlike the uniform discrete circle spaces defined in Example~\ref{exp:discrete_circle_uniform}, the distribution with maximal asynchrony over the space given by Eq.~\eqref{eq:discrete_circle_general} need not be uniform. For instance, choosing $a_0=0, a_1=1/2, a_2=3/4, a_3=1$ defines a three-state circular space with one long state occupying half the cycle and two shorter states, each occupying 1/4 of the cycle. In this case, the uniform distribution $\pi=[1/3, 1/3, 1/3]$ has generalized variance $\Var(\pi) = 1/3((3/8)^2 + (1/4)^2)$ and synchrony $F(\pi) = 1 - \sqrt{1/3((3/8)^2 + (1/4)^2}/(9/32) \approx 0.0748$. On the other hand, a distribution with maximal generalized variance (equal to 81/1024) on this space is $\pi^* = [7/16, 9/32, 9/32]$. The uniform distribution concentrates more mass in each of the ``shorter'' states and less mass in the ``longer'' state compared with $\pi^*$. In this way the uniform distribution is somewhat more synchronized than $\pi^*$.

As in Example~\ref{exp:discrete_circle_uniform}, Proposition~\ref{prop:approximability} can be used to provide an upper bound on the error that may be incurred by imprecise representation of states occupying each arc, $[a_i, a_{i+1})$, $i=0,\ldots, p-1$.
\end{example}

\section{Computing Synchrony}
\label{sec:computing_asynch}

As can be seen from the preceding examples, computing the synchrony normalization constant must be handled for each space individually, often without knowing \textit{a priori} which distribution(s) will have maximal asynchrony. Both the determination of $\nu_{(M,d)}$ and the calculation of the synchrony for a given measure $\pi$ requires determination of its generalized variance.  As observed, this can be computed by finding the nearest delta distribution to $\pi$ in the Wasserstein-2 metric. 

In general, if $(M,d)$ is a complete and separable metric space (which will be the case for $M$ compact) one defines the Wasserstein-2 distance between probability measures (with finite variance) $\pi,\mu \in P(M)$ by
$$
W_{2}(\pi,\mu) = \left(\inf_{\gamma \in \Gamma[\pi,\mu]} \int_{M\times M} d(x,y)^2 \; d\gamma(x,y)\right)^{1/2},
$$
\noindent where $\gamma \in \Gamma[\pi,\mu]$ is a \textit{coupling}: a measure in the set of product measures on $M\times M$ having marginals $\pi$ and $\mu$, i.e. $\gamma(E \times M) = \pi(E)$ and $\gamma(M \times E) = \mu(E)$ for all measurable $E \subset M$. \cite{ambrosio_users_2013}.
In the present setting, one of the two distributions under consideration is always a delta measure, $\mu=\delta_{\alpha}$, which severely constrains $\Gamma[\pi,\mu]$ and leads to efficient algorithms for important special cases.

\subsection{Finite State Spaces}
\label{ssec:computing_finite_state_spaces}
    Consider a finite state space $M=\{0, \ldots, p-1\}$ and metric specified by the $p\times p$ symmetric matrix $\bm{D}[i,j]=d(i,j)$. Let the matrix $\bm{C}[i,j] = d(i,j)^2$ encode the squared distances between states. Then the squared Wasserstein-2 distance between a distribution $\pi$ and a delta distribution $\delta_i$ on $M$ is simply $(\bm{C}\pi)[i]$, the $i$-th coordinate of the vector $\bm{C}\pi$. Therefore, by definition, the (squared) normalization constant associated with $F(\pi)$ will be 
    \begin{equation}
    \nu_{(M,d)}^2 = \max_{\pi \in \mathcal{P}(M)} \left\{\min_{0 \leq i \leq p-1} \sum_{j=0}^{p-1}\bm{C}[i,j]\pi[j]\right\}.
    \label{eq:norm_const_finite_state}
    \end{equation}
    In this case, $\mathcal{P}(M)$ is the set of length-$p$ probability vectors (with non-negative components which sum to 1). 
    Importantly, the optimization problem in  Eq.~\eqref{eq:norm_const_finite_state} can be realized as a linear programming (LP) problem, for which a global optimum is guaranteed to be found and for which many highly optimized methods exist \cite{Huangfu2018}.
    \begin{restatable}{prop}{lpproblem}
        For a finite metric space $(M,d)$ with $p$ states, the distribution with maximal variance can be determined by a linear program (LP) in $p+1$ variables, with $1$ equality constraint and $2p$ inequality constraints.
    \label{prop:finite_state_lp_problem}
    \end{restatable}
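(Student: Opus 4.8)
The plan is to recast the max--min expression in Eq.~\eqref{eq:norm_const_finite_state} as a single linear maximization by introducing one auxiliary scalar variable that captures the value of the inner minimum. First I would observe that, for any fixed probability vector $\pi$, the quantity $\min_{0\le i\le p-1}(\bm{C}\pi)[i]$ equals the largest real number $t$ satisfying $t \le (\bm{C}\pi)[i]$ for every $i$. Consequently, augmenting the $p$ decision variables $\pi[0],\dots,\pi[p-1]$ with one additional variable $t\in\R$ and maximizing $t$ subject to the $p$ constraints $t\le\sum_{j=0}^{p-1}\bm{C}[i,j]\pi[j]$, $i=0,\dots,p-1$, reproduces exactly the value $\nu_{(M,d)}^2$. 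This accounts for $p+1$ variables in total.

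Next I would encode membership $\pi\in\mathcal{P}(M)$ as linear constraints: requiring $\pi$ to be a probability vector is precisely the single equality constraint $\sum_{j=0}^{p-1}\pi[j]=1$ together with the $p$ non-negativity inequalities $\pi[j]\ge 0$. Combining these with the $p$ ``min'' inequalities $t-\sum_{j=0}^{p-1}\bm{C}[i,j]\pi[j]\le 0$ yields a linear objective (maximize $t$), one equality constraint, and $2p$ inequality constraints, as claimed. Since $\bm{C}$ is a fixed matrix of squared distances, every constraint is affine in $(\pi,t)$, so the reformulation is genuinely an LP. I would also remark that no sign constraint on $t$ is required, since the non-negativity of the squared distances forces the optimal $t$ to be non-negative automatically.

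The only point requiring care is verifying that the LP optimum coincides with the original max--min value; I expect this equivalence to be the main, though routine, obstacle, after which the counting of variables and constraints is immediate. In one direction, any feasible $(\pi,t)$ satisfies $t\le(\bm{C}\pi)[i]$ for all $i$, hence $t\le\min_i(\bm{C}\pi)[i]$, so the LP value does not exceed $\nu_{(M,d)}^2$. In the other direction, for any $\pi\in\mathcal{P}(M)$ the pair $\bigl(\pi,\min_i(\bm{C}\pi)[i]\bigr)$ is feasible and attains the inner minimum, so the LP value is at least $\nu_{(M,d)}^2$. Together these give equality. Finally, a maximizer $\pi^*$ of the LP has objective value $t^*=\min_i(\bm{C}\pi^*)[i]=\Var(\pi^*)=\nu_{(M,d)}^2$, so $\pi^*$ is indeed a distribution of maximal generalized variance, which establishes the proposition.
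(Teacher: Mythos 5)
Your proposal is correct and follows essentially the same route as the paper's proof: introduce a single auxiliary variable $t$ for the inner minimum, maximize $t$ subject to the $p$ constraints $t \le (\bm{C}\pi)[i]$, the simplex equality $\sum_j \pi[j]=1$, and the $p$ non-negativity constraints, giving $p+1$ variables, one equality, and $2p$ inequalities. Your explicit two-direction verification that the LP optimum equals the max--min value is a slightly more careful writeup of a step the paper treats as immediate, but it is the same argument.
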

     The precise LP problem referenced to in Proposition~\ref{prop:finite_state_lp_problem} is provided in Appendix \ref{app:finite_state_lp_problem}. We have implemented routines in Python to compute the normalization constant for any finite state space \cite{modulerepo} that make use of fast LP solvers provided in the the Scipy Python module \cite{2020SciPy-NMeth}. Empirical computational costs of our implementations for example spaces are provided in Section \ref{ssec:implementations}.
     
    \subsection{Empirical Distributions on the Circle}
    \label{ssec:computing_empirical_circle}
    Having established methods for computing synchrony for any finite state space using results from optimal transport and LP, we turn our attention to an important continuous state space: the circle. Often, in practice, a population distribution will be an empirical sample distribution consisting of finitely many point masses (e.g., flow cytometry data \cite{McKinnon2018}). Given $\alpha \in S^1 \cong [0,1)$ and an empirical measure on the circle,  $\pi = \frac{1}{n}\sum_{i=1}^{n} \delta_{x_i}$ (supported on a finite sample $\{x_i\}_{i=1}^{n} \subset [0,1)$, which may have repeated elements), the squared Wasserstein-2 distance between $\delta_\alpha$ and $\pi$ is
    \begin{align}
        W_2^2(\delta_{\alpha},\pi) & = \frac{1}{n} \sum_{i=1}^n d(x_i,\alpha)^2 \nonumber \\
        & = \frac{1}{n} \sum_{i=1}^n \min(|x_i-\alpha|, 1-|x_i-\alpha|)^2 \nonumber \\
        & = \frac{1}{n} \sum_{i=1}^n \begin{cases} |x_i-\alpha|^2, & \text{ if } |x_i-\alpha| \leq 1/2 \\ 
        (1-|x_i-\alpha|)^2, & \text{ if } |x_i-\alpha| > 1/2.
        \end{cases}
        \label{eq:wasserstein2_circle_formula}
    \end{align}    
    We have already determined the synchrony normalization constant to be $\sqrt{3}/6$ for the circle of circumference 1 thanks to Proposition \ref{prop:circle_var_max}. It remains to determine the choice of $\alpha\in S^1$ which minimizes the function $W_2^2(\delta_{\alpha}, \pi)$ for the given empirical distribution, $\pi$. 
    
    Ultimately, this task is a constrained optimization problem in one variable, $\alpha \in [0,1)$, for which numerous global optimization methods are well suited. However, such methods generally do not come with guarantees on finding a global optimum for non-convex functions. Although  $W_2^2(\delta_{\alpha}, \pi)$ will not be convex on $[0,1)$, it will always be convex on subintervals between consecutive antipodal points, $a(x_i) = (x_i+1/2) \mod 1$, that partition [0,1) (see Fig.~\ref{supp_fig:w22_circle_example} for an example). This observation leads to an algorithm that is guaranteed to find the generalized variance of $\pi$ in a finite number of evaluations of Eq.~\eqref{eq:wasserstein2_circle_formula} that grows linearly with the number of point masses. This is made precise in the following proposition. 
    \begin{restatable}{prop}{empiricalcirclealgo}
    Let $\pi = \frac{1}{n}\sum_{i=1}^{n} \delta_{x_i}$ be an empirical distribution on the circle of circumference 1, supported on $\{x_i\}_{i=1}^{n} \subset [0,1)$. Then $F(\pi)$ is computable in at most $2n+1$ evaluations of $W_2^2(\delta_{\alpha}, \pi)$.
    \label{prop:complexity_empirical_circle}
    \end{restatable}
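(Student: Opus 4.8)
The plan is to reduce the computation of $F(\pi)$ to a one-dimensional minimization and then exploit the piecewise structure of the objective. Since Proposition~\ref{prop:circle_var_max} already gives $\nu_{(S^1,d)}=\sqrt{3}/6$, computing $F(\pi)=1-\nu_{(S^1,d)}^{-1}\sqrt{\min_{\alpha\in[0,1)} W_2^2(\delta_\alpha,\pi)}$ amounts to minimizing $h(\alpha):=W_2^2(\delta_\alpha,\pi)$ as given by Eq.~\eqref{eq:wasserstein2_circle_formula}, after which the conversion to $F(\pi)$ is a single arithmetic step. So it suffices to show that $\min_\alpha h(\alpha)$ can be located using at most $2n+1$ evaluations of $h$.

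First I would analyze a single summand $\alpha\mapsto d(x_i,\alpha)^2$. Writing $t=(\alpha-x_i)\bmod 1$, it equals $t^2$ for $t\in[0,1/2]$ and $(1-t)^2$ for $t\in[1/2,1)$, so it is a smooth convex quadratic everywhere except for a single concave kink at the antipode $a(x_i)=(x_i+1/2)\bmod 1$; in particular it is smooth, with a minimum, at $x_i$ itself. Consequently the only breakpoints of $h=\frac1n\sum_i d(x_i,\cdot)^2$ are the antipodal points $\{a(x_i)\}$, of which there are at most $n$, and these partition the circle into at most $n$ arcs. On each such arc no antipode is crossed, so each summand coincides with $\frac1n(\alpha-c_i)^2$ for a fixed real representative $c_i$ of $x_i$ (namely $x_i$, $x_i+1$, or $x_i-1$); hence $h$ restricts to a strictly convex quadratic with vertex at the mean $\bar c=\frac1n\sum_i c_i$.

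Next I would minimize arc by arc. Because $h$ is continuous on the compact circle and convex on each closed arc, its minimum over an arc is attained either at the interior vertex $\bar c$ (when $\bar c$ lies in the arc) or at one of the arc's endpoints, and the global minimum is the smallest of these per-arc minima. The candidate minimizers therefore consist of the at most $n$ antipodal endpoints together with the at most $n$ per-arc vertices, at most $2n$ points in all; evaluating $h$ at each candidate (with one additional evaluation at a fixed reference point used to enumerate the arcs) and taking the minimum locates $\min_\alpha h$ in at most $2n+1$ evaluations, as claimed.

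The main obstacle is the structural step: establishing rigorously that each single-point cost is smooth apart from exactly one concave kink at its antipode, so that the partition is generated by the antipodes alone and not by the support points, and that the sum stays convex on each arc. Some care is also needed with degenerate configurations (repeated $x_i$, coinciding antipodes, and the arc that wraps across $0$) and with the bookkeeping of shared endpoints, to ensure that the candidate count never exceeds $2n+1$.
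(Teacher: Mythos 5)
Your proposal is correct and follows essentially the same route as the paper's proof: both identify the antipodal points $a(x_i)$ as the only non-smooth points of $\alpha \mapsto W_2^2(\delta_\alpha,\pi)$, observe that the objective restricts to a convex quadratic on each arc between consecutive antipodes (your vertex $\bar c$ is exactly the paper's closed-form critical point $x/n$ with $\pm 1$ corrections for wrapped representatives), and take as candidates the per-arc critical points together with the arc endpoints, yielding at most $2n+1$ evaluations.
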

    The algorithm to determine the synchrony of a finite empirical distribution on the circle, whose correctness is established in the proof of Proposition~\ref{prop:complexity_empirical_circle}  in Appendix~\ref{app:proof_complexity_empirical_circle}, relies on computing the locations of the exact minimizers in each subinterval between consecutive antipodal points. These are either the critical points at which $dW_2^2(\delta_{\alpha}, \pi)/d\alpha = 0$ or the endpoints of the intervals in the partition of [0,1) by antipodal points, giving at most $2n+1$ candidate barycenters of $\pi$.


\section{Computational Results}\label{sec:results}
\subsection{Implementations}
\label{ssec:implementations}
    We implemented routines for computing synchrony measures for distributions on general finite state spaces, including discrete circular spaces, and for empirical distributions on the circle. These are provided in an open-source Python module \cite{modulerepo}. To assess the practical efficiency of computing synchrony, we conducted several experiments to measure the time cost of these implementations for varying numbers of states in finite state spaces (Fig.~\ref{fig:implementation_timings}A), and separately for the size of the support of empirical distributions on the circle (Fig.~\ref{fig:num_evals_exact_vs_approx}A). All timing experiments were conducted on an Intel(R) Core(TM) i9-7920X CPU with 12 cores at 2.90GHz, (running at approximately 4GHz).

    \begin{figure}[!ht]
    \includegraphics[width=1.0\textwidth]{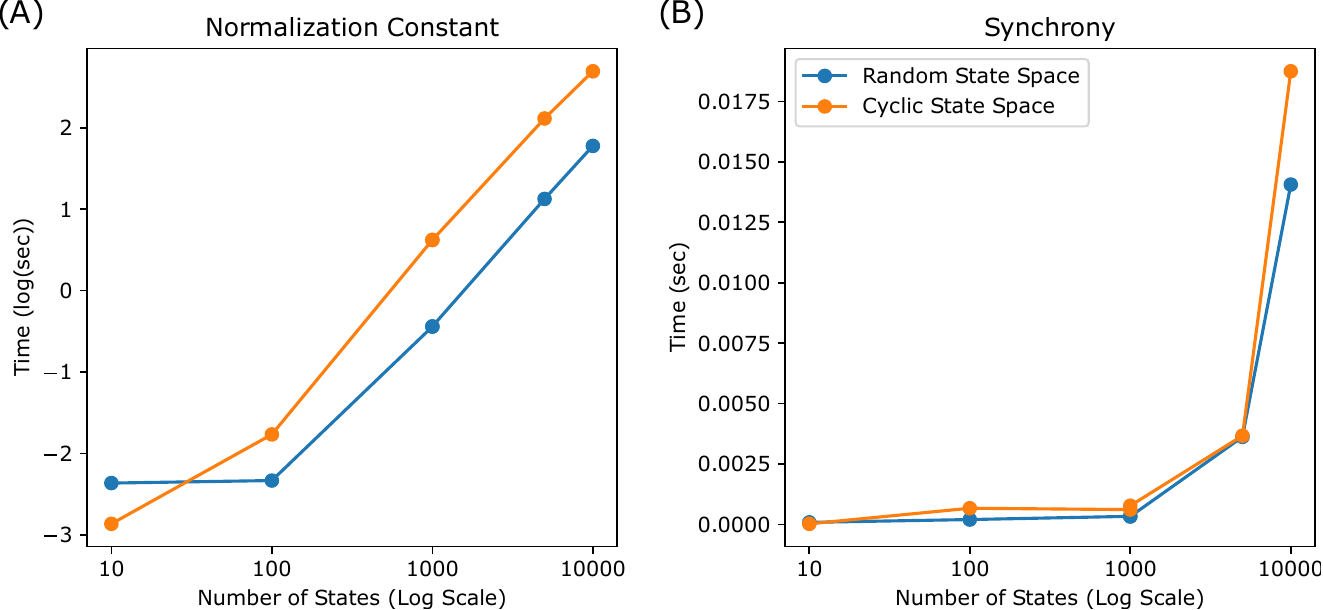}
    \caption{Times to compute the normalization constants (A) and the synchrony measures (B) for finite state spaces.}
    \label{fig:implementation_timings}
    \end{figure}

    In the case of a finite state space, the calculation of the normalization constant via solving an LP optimization problem is the most costly operation. That said, for a fixed state space, this constant need be computed only once. Moreover, LP problems are widespread and demand for solvers has led to numerous high-efficiency implementations. Our implementation leverages LP solvers provided in the open-source Python module, Scipy \cite{2020SciPy-NMeth}, which makes use of either high performance dual revised simplex or interior-point methods written in C++ \cite{Huangfu2018}. 
    
    We found that for even large state spaces the normalization constant can be computed in moderate time ($\sim$60 seconds for a random Euclidean state space with 10,000 states, Fig.~\ref{fig:implementation_timings}A). Interestingly, we found that the effective compute time of the normalization constant can depend significantly on the particular finite metric space. We observed approximately an order of magnitude increase in the average time to solve for the normalization constant of a uniform finite cyclic state space compared to random Euclidean state spaces of the same size, for large state spaces (Fig.~\ref{fig:implementation_timings}A). Here a random Euclidean state space was generated by sampling $n \in \{10, 100, 1000, 10000\}$ points uniformly at random from $[0,1)^{12}$ and treating each point as a state. The distance between two states was taken to be the standard Euclidean distance between the points. It should be noted that computing the normalization constant for a uniform cyclic state space is unnecessary given that the exact normalization is achieved by the uniform distribution and can be exactly computed without appealing to LP (Proposition \ref{prop:finite_cyclic_var_max}). 
    
    Having found the finite state space normalization constant, calculation of synchrony of a distribution is extremely fast (on the order of 10 milliseconds for a state space with 10000 states), as it requires only a single matrix-vector multiplication and finding the minimum entry in the resulting vector (Fig.~\ref{fig:implementation_timings}B).

    The theoretical results in Section \ref{sec:computing_asynch} guarantee that the exact value of synchrony of an empirical distribution supported on $n$ points in the circle can be determined by a number of evaluations of the Wasserstein-2 distance that grows linearly in $n$. However, we can sacrifice the guarantee of correctness to potentially gain computational efficiency by approximating synchrony using a global optimization algorithm. Our implementation of the routine which computes synchrony of an empirical distribution on the circle \cite{modulerepo} uses either the exact method established in Appendix~\ref{app:proof_complexity_empirical_circle}, or an approximate method using a basin-hopping global optimization algorithm, implemented in the Python module Scipy \cite{2020SciPy-NMeth} based on the method described in \cite{basinhopping}.
    
    By using a basin-hopping global optimization method \cite{basinhopping}, we found that for uniform random data, on average, less than 2500 evaluations of the Wasserstein-2 distance were required to find an estimate near double precision machine epsilon (max error $\approx 2.22e-16$, not shown) of the exact measure of synchrony for empirical distributions consisting of up to 10,000 point masses (Fig.~\ref{fig:num_evals_exact_vs_approx}B). This results in a significantly lower synchrony compute time than the exact calculation, for large numbers of point masses on the circle, and apparently sublinear growth with the number of point masses (Fig.~\ref{fig:num_evals_exact_vs_approx}A).
    
    \begin{figure}[!ht]
    \includegraphics[width=1.0\textwidth]{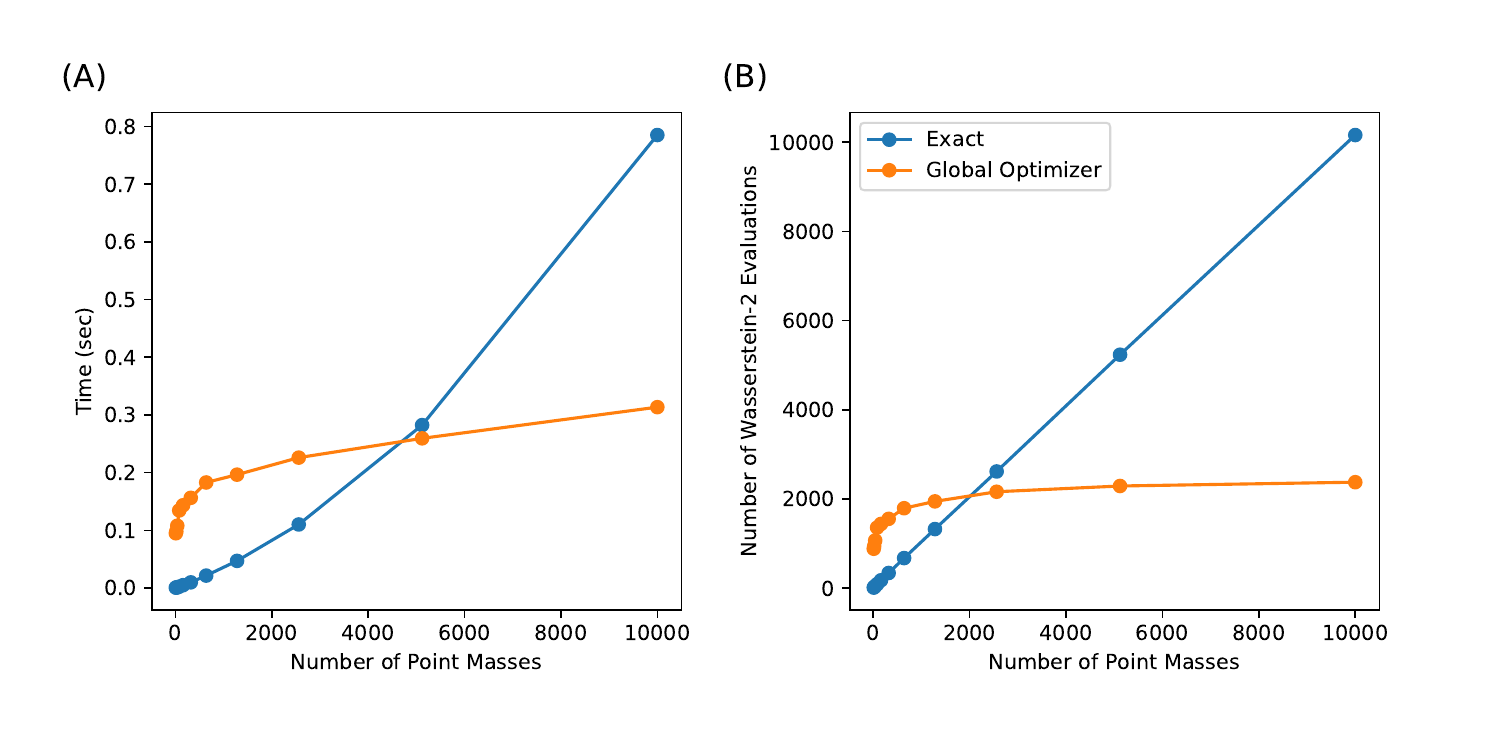}
    \caption{(A) Time to compute synchrony measure for empirical distributions on the circle with $n$ points using either the exact, provably correct algorithm or a basin-hopping global optimization algorithm. (B) The numbers of evaluations of the Wasserstein-2 metric between the empirical distribution and a delta distribution on the circle.}
    \label{fig:num_evals_exact_vs_approx}
    \end{figure}

\subsection{Dynamics of Synchrony}
\label{ssec:dynamics}

Many applications focus on populations of individuals \textit{with dynamics}, meaning that the state of an individual may change over time. The first and most basic observation about synchrony in this setting is that it too may vary over time as the distribution of the population across states changes due to variation in the dynamics of individuals.

\subsubsection{Loss of Synchrony Due to Variable Progression Rates}
\label{sec:synchloss_var}
Many common biological experiments derive measurements from populations of cells (e.g., RNA-Seq \cite{Wang2009}). Often the goal of such experiments is to make inferences about individuals within the population, and thus synchronization procedures are required to reduce the effect of the convolution of states (e.g., gene expression) occupied by a distribution of individuals in a cell population. For example, elutriation by centrifuge can be used to synchronize a population of yeast cells in their cellular division cycle \cite{Rosebrock2017SynchronizationOB}, or cells can be chemically arrested in a particular phase \cite{BREEDEN1997332}. Circadian-clock synchrony can be similarly chemically-induced in mammalian tissue cultures \cite{BALSALOBRE1998929}, and numerous methods exist to synchronize \textit{Plasmodium} parasites in their IEC \cite{Ranford-Cartwright2010}.  Each of these cases represents a periodic biological process, and in each case it is well known that population synchrony will degrade over time due to a variety of factors including asynchronous cellular development \cite{RN1716, RN1439}, and/or natural biological variation in intrinsic periods \cite{science, RN2239}. 

To illustrate and quantify the loss of population synchrony of some periodic processes due to variability in cycle progression rates, we measured synchrony over time of a simple model of a population of phase oscillators. Here the observed state space $S=S^1 \cong [0,1)$ represents cycle phase, and the $i$th individual parasite's state at time $t$, $x_i(t)$, changes at a constant rate, so 
$$
dx_i(t)/dt = \beta_i > 0.
$$
We expect the rate at which the population loses synchrony will increase with increasing variance in cycle progression rates across the population. Fig.~\ref{fig:phase_oscillator_synchloss}(A) shows the dynamics of synchrony loss of a population of 10,000 oscillators whose phase velocities are drawn from a log-logistic distribution,
$$
\beta_i \sim \text{LL}(x;\mu,\sigma), 
$$
 for three choices of scale parameter $\sigma$ in the density function
$$
\text{LL}(x;\mu,\sigma) = \frac{\exp(z)}{\sigma x(1+\exp(z))^2}, \; \; z:=(\ln(x)-\mu)/\sigma,
$$
where $\mu = \ln(\pi\sigma/\sin(\pi\sigma))$. The choices of $\mu$ and $\sigma$ enforce that the expected period length is always 1 (cycle), and the exact coefficient of variation (CoV, $\sqrt{\tan(\pi\sigma)/(\pi\sigma)-1}$) of the phase-velocity distribution is either 0.1, 0.2, or 0.5 (corresponding to $\sigma \approx$ .054805, 0.107709, and 0.241695 respectively). Each population was initially perfectly synchronized in phase 0. 

    \begin{figure}[!ht]
    \includegraphics[width=1.0\textwidth]{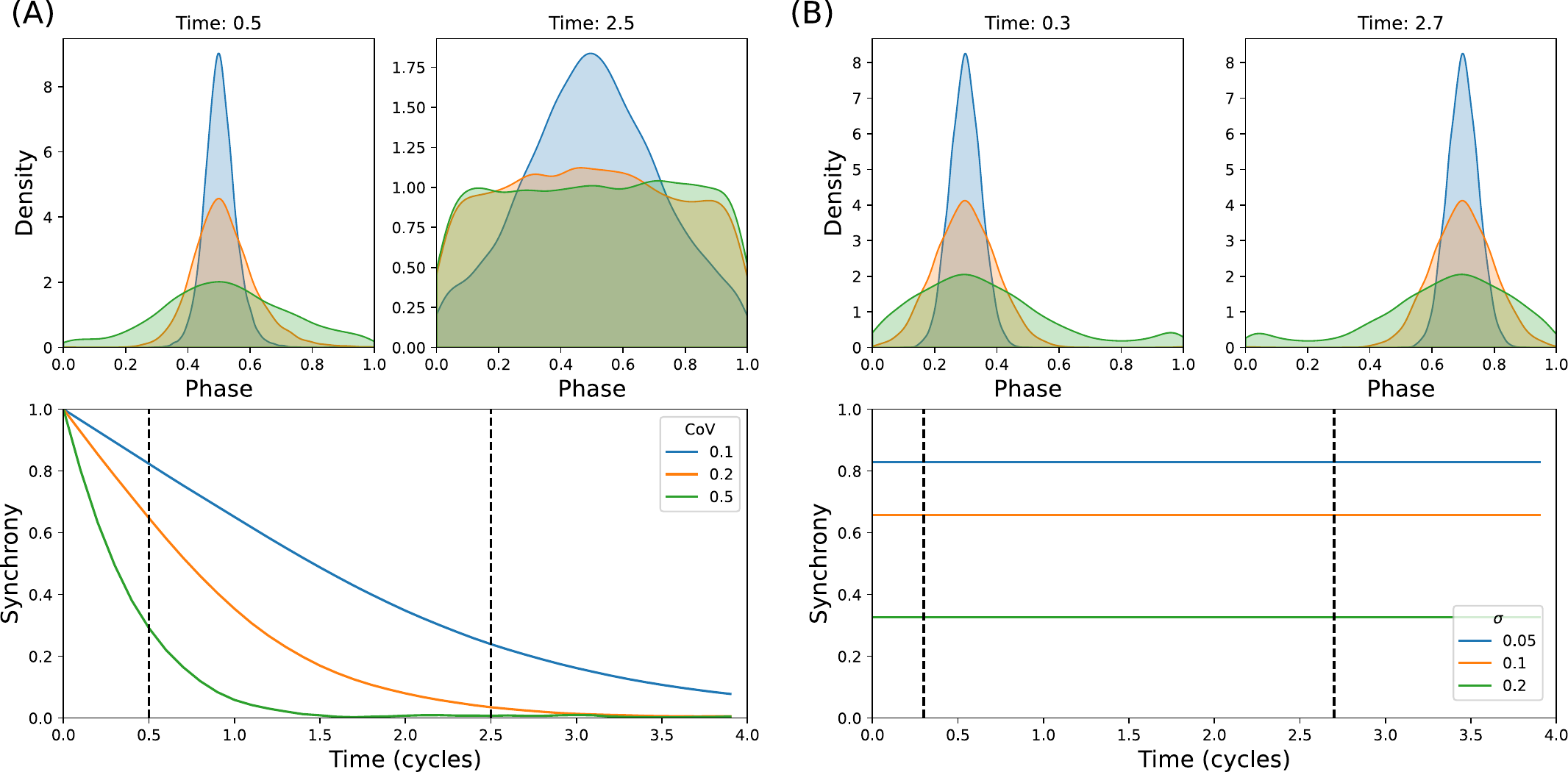}
    \caption{(A) Phase distributions of a population of 10,000 oscillators (top row) with individuals having different phase velocities at time t=0.5 (left) and t=2.5 (right), and their synchrony over time (bottom row). Initial phases were set to 0 for all individuals. Phase velocities were drawn from log-logistic distributions with coefficients of variation (CoV) either 0.1, 0.2, or 0.5. (B) Phase distributions of a population of 10,000 oscillators (top row) with constant (1) phase velocity at time t=0.3 (left) and t=2.7 (right), and their synchrony over time (bottom row). Initial phases were drawn from normal distributions with variances, $\sigma$ either 0.05, 0.1, or 0.2 wrapped to the interval [0,1).}
    \label{fig:phase_oscillator_synchloss}
    \end{figure}

Although all populations converge towards a steady-state uniform distribution, the rate of convergence increases with greater cycle period variability, as expected. For example, with a population of phase oscillators whose velocities exhibit a coefficient of variation of 10\% of the average period length, synchrony has dropped from 1 to approximately 0.5 after 1.5 cycles. On the other hand, given a larger variability in cycle progression rates (CoV 0.5), by 1.5 cycles the population is essentially uniformly distributed (Fig.~\ref{fig:phase_oscillator_synchloss}A). 

Of course, if $\beta_i = \beta$ is constant across the population, synchrony will not change in time: the barycenter of the initial population will simply progress at the same rate $\beta$, and in this rotating reference frame, the distribution will not change. In Fig.~\ref{fig:phase_oscillator_synchloss}(B), three populations of phase oscillators with different amounts of initial synchrony were produced by sampling initial phases from the mean-0 normal distribution with density 
$$
N(x;\sigma) = \exp(-x^2/(2\sigma))/(\sigma\sqrt{2\pi}),
$$
and wrapping to the unit interval by reducing samples modulo-1. Each individual progressed through its cycle phases at the same rate, resulting in constant synchrony over time. Note that a population's (constant) synchrony decreases with increasing initial phase variance, $\sigma$, chosen in Fig.~\ref{fig:phase_oscillator_synchloss}(B) to be either 0.05, 0.1, or 0.2. For $\sigma=0.05$, synchrony is over 0.8, indicative of the a highly concentrated population, while for $\sigma=0.2$ the distribution has synchrony less than 0.4, reflecting the large variance of the broad distribution over cycle phases.


\subsubsection{The Impact of Replication on Population Synchrony}
\label{sec:replication}

For cell division cycles and \textit{Plamodium} IECs, the population size and distribution across states are changing over time due to replication. In the latter case, replication occurs at a particular phase of development and schizogony results in anywhere between 2 to 32 daughter parasites \cite{RN2240}. It has been suggested that replication induces a biased assessment of population synchrony, using microscopy measurements that identify cellular state by morphological characteristics (so-called \textit{stage percentage curves}) \cite{GREISCHAR2019341}.  We clarify this observation through a reanalysis of the heuristic model put forward in \cite{GREISCHAR2019341}, using the rigorous definition of population synchrony presented here.

    \begin{figure}[!ht]
    \includegraphics[width=1.0\textwidth]{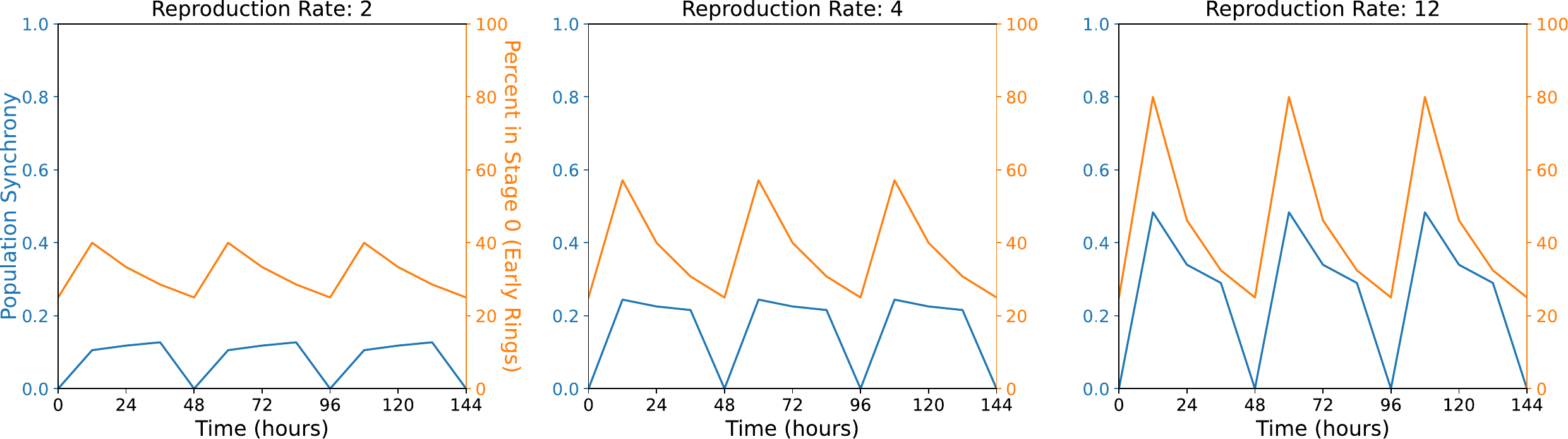}
    \caption{Synchrony over time of a heuristic discrete-time model of a replicating population of \textit{Plasmodium} parasites in a uniform discrete circular space consisting of 4 states (0=early ring, 1=late ring, 2=trophozoite, and 3=schizont). Parasites are assumed to progress at a constant rate from stage $i$ to stage $i+1 \; (\text{mod } 4)$ and replicate with the given reproduction rate upon leaving stage 3 (schizont) and entering stage 0 (early ring). Also shown is the stage 0 percentage curve for three different choices of reproduction rates ($s=2,4,$ and $12$).}
    \label{fig:greischar_model_popsynch}
    \end{figure}

Consider a population of malaria parasites progressing through the IEC, encoded as a time-dependent length-4 vector $\bm{x}$, whose integral components count the number of parasites in each of four distinct stages (early ring, late ring, trophozoite, and schizont). If we assume no variability in the progression rates of the parasites, this population can be modeled by 
\begin{equation}
\bm{x}(t+1) = \bm{x}(t)
\begin{bmatrix}
0  &  1  &  0  &  0  \\ 
0  &  0  &  1  &  0  \\
0  &  0  &  0  &  1  \\
s  &  0  &  0  &  0  \\
\end{bmatrix},
\label{eq:greischar_model}
\end{equation}
where $s \geq 1$ is the reproduction rate at the schizont-to-early ring transition \cite{GREISCHAR2019341}. In this setting the population distribution at time $t$ is then
\begin{equation*}
\pi(t) = \bm{x}(t)/N(t),
\end{equation*}
where $N(t)$ is the total size of the population at time $t$, so that the coordinates of $\pi(t)$ measure the fraction of parasites in each stage, which is the measurement often made in this context. 

When $s=1$, the population does not experience replication. Instead $\bm{x}(t+1)[i] = \bm{x}(t)[i-1]$ (where indices are taken modulo 4). The result is a cyclic shift of the masses in each state at each time step, which yields a constant degree of synchrony over time. However, if $s > 1$, even an initially uniformly distributed (maximally asynchronous) population (i.e., $\pi(0) = [1/4,1/4,1/4,1/4]$) will show periodic fluctuations in the stage percentage curves due to replication, and the extent of the fluctuation will depend on the size of the reproduction rate. This phenomenon is naturally reflected by our measure of population synchrony. Fig.~\ref{fig:greischar_model_popsynch} reports the population synchrony in the uniform discrete circular state space (defined in Example \ref{exp:discrete_circle_uniform}) of a population of parasites initially uniformly distributed and moving at a constant rate across 4 morphological stages, according to Eq.~\eqref{eq:greischar_model}. Fig.~\ref{fig:greischar_model_popsynch} also shows the percentage of parasites in stage 0 (early ring), which too fluctuates periodically over time. The reason that the population is becoming periodically more and less synchronous in time is because the population's distribution over morphological stages changes periodically over time.  We note, as was observed in \cite{GREISCHAR2019341}, that the maximum percentage of parasites in the early ring stage over one period grows with the reproduction rate, $s$. This is expected since the population can become arbitrarily highly concentrated in the first stage following replication with a sufficiently large choice of reproduction rate. 

Thus replication need not result in an overestimate of synchrony if defined as a measure of how well a population is aligned in some state space at each moment in time. Instead, replication induces dynamic changes in population synchrony--even in a population with zero variability in its phase velocities.


\subsubsection{Boundary Effects Due to Discrete Observations}
\label{ssec:boundary}

The examples in Sections \ref{sec:synchloss_var} 
 and \ref{sec:replication} remind us that synchrony, if measured at only one time point, may be insufficient to characterize synchrony as it evolves over time. In this section, we illustrate how measurement limitations can also affect the accuracy of estimating population synchrony, especially when limiting measurements to a single time point.

In the case of \textit{Plasmodium} parasites, the developmental phase is usually approximated by an assignment of morphological stage. We regard this as a discretization of the continuous circle (whose states are phases measuring the fraction of the cycle length traversed) into disjoint arcs over which observed morphology is invariant (see Example~\ref{exp:discrete_circle_general}). Unlike replication, the loss of information caused by discretization can introduce substantial bias into the observed progression of synchrony, with synchrony periodically over- and under-estimated. Specifically, spurious oscillations in synchrony can be observed as population distributions traverse sharp stage boundaries (Fig.~\ref{fig:synchrony_loss_discretization}).


    \begin{figure}[!ht]

\begin{tabular}{c}
\begin{tikzpicture}
    \begin{scope}[shift={(0,0)}]
        \draw[thick] (0.75,0) rectangle (1.2,2);
        
        \draw[thick] (0,0) -- (6,0);
        
        \foreach \x in {1.5, 3, 4.5}
            \draw[dashed] (\x,0) -- (\x,2.5);
            
        \node[font=\tiny] at (0.5,2.5) {State 0};
        \node[font=\tiny] at (2.25,2.5) {State 1};
        \node[font=\tiny] at (3.75,2.5) {State 2};
        \node[font=\tiny] at (5.5,2.5) {State 3};
    \end{scope}
    
    \begin{scope}[shift={(7,0)}]
        \draw[thick] (1.25,0) rectangle (1.7,2);
        
        \draw[thick] (0,0) -- (6,0);
        
        \draw[dashed] (1.5,0) -- (1.5,2.5);
        \foreach \x in {3, 4.5}
            \draw[dashed] (\x,0) -- (\x,2.5);
            
        \node[font=\tiny] at (0.5,2.5) {State 0};
        \node[font=\tiny] at (2.25,2.5) {State 1};
        \node[font=\tiny] at (3.75,2.5) {State 2};
        \node[font=\tiny] at (5.5,2.5) {State 3};
    \end{scope}
\end{tikzpicture}

\\

 (A)  \hspace{2.5in} (B)

\\

\begin{tikzpicture}
    \draw[thick] (0,0) -- (6,0);
    
    \foreach \x in {1.5, 3, 4.5} {
        \draw[dashed] (\x,0) -- (\x,2.5); 
        
        \draw[thick] (\x-0.25,1.5) -- (\x,0.4) -- (\x+0.25,1.5);
    }
    
    \foreach \x in {0.25, 1.75, 3.25, 4.75} {
        \draw[thick] (\x,1.5) -- (\x+1,1.5);
    }

    \node at (2.5,-0.5) {Time};
    \draw[thick, -stealth] (3.25,-0.5) -- (4.5,-0.5); 

    \node[rotate=90] at (-0.5,1) {$F(\pi)$};

\end{tikzpicture}

\\

(C)

\end{tabular}
\caption{Periodic synchrony due to the discretization of $S^1$. Panels (A) and (B) represent two observations of the same population separated by some period of time. The shape of the distribution of individuals $\pi(t)$ remains unchanged, so in reality, the population has constant synchrony. However, if the underlying continuous space is discretized into intervals, then perfect synchrony (all individuals at state 0) in panel (A) decays in panel (B) (half the individuals in state 0 and the other half in state 1). Panel (C): This apparent change from perfect synchrony to substantial asynchrony repeats periodically over time. Here, the dotted lines represent moments where $\pi(t)$ is equidistant across a boundary.}\label{fig:synchrony_loss_discretization}

    \end{figure}
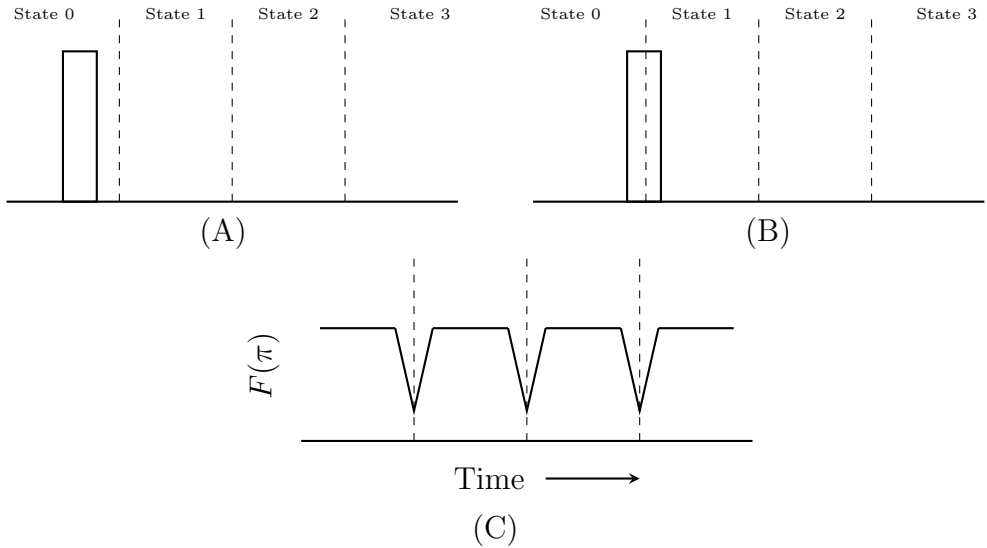
    
Suppose a population is distributed according to $\pi(t)$ as a function of time in the true state space $M$, where $M$ is the circle parameterized by $[0,1)$, and $\pi(t)$ is the uniform distribution over the interval $[t,t+1/8]$ (taken modulo $1$). Under this assumption, synchrony  is constant in time, $F(\pi(t)) = 95/96$ for all $t$.

However, suppose that there are only 4 observable states equally dividing the circle, so that events are classified according to the state space $X = \{0,1,2,3\}$ as in Section~\ref{sec:replication}. If state $i$ corresponds to the arc $[i/4,(i+1)/4), i=0,\ldots, 3$, the observed population distribution $\pi_X(t)$ will then consist of either a single state $i$, whenever $[t,t+1/8] \subset [i/4,(i+1)/4)$, leading to $F(\pi')=1$, or a population split between two states whenever $i/4 \in (t,t+1/8)$ modulo 1. By symmetry, the smallest observed synchrony will occur when the population is evenly split across the boundary of two discrete states, resulting in an observed synchrony of $1-1/\sqrt{3}$. The synchrony measure $F(\pi'(t))$ is given schematically in Fig.~\ref{fig:synchrony_loss_discretization}, showing the linear excursions from perfect synchrony as the true population traverses discrete state boundaries. 
In other words, the observed distribution appears to be (periodically) perfectly synchronized in $S$, even though the unobserved true distribution is neither perfectly synchronized nor dynamic in $M$.



To further illustrate the effect of discretization of a state space, we performed a reanalysis of a phase oscillator model---originally reported in \cite{science} and described in Section \ref{sec:synchloss_var}---and experimentally-measured time series stage percentage curves for several strains of \textit{P. falciparum}. We computed synchrony for each model phase oscillator population, thought of as time-evolving empirical distributions on the circle. We also computed synchrony for the experimentally-measured and the model-predicted stage percentage curves using finite circular state spaces, with 4 states determined by the strain-specific stage threshold phases reported in \cite{science}. For completeness these thresholds are provided in Table~\ref{supp_tab:pfalc_stage_thresholds} in Appendix \ref{supp:pfalc_stage_spaces}. The distances between states were taken to be the arc lengths between the midpoints of the model-inferred stage intervals as defined in Example \ref{exp:discrete_circle_general}.

The cartoon example in Fig.~\ref{fig:synchrony_loss_discretization}, suggests that synchrony measured in the observable finite state space exhibits qualitatively different dynamics than synchrony measured in the true underlying state space. In Fig.~\ref{fig:science_reanalysis} we observe synchrony smoothly degrading in time in the phase oscillator model (due to non-zero variance in phase velocities across the population, as in Fig.~\ref{fig:phase_oscillator_synchloss}(A)). However, the discretization of the circle into morphological stages induces rapid decreases and increases in synchrony as the population's distribution in phase space crosses stage thresholds. Thus, a single measurement of population synchrony in the coarse 3-stage discretization of the continuous phase space dramatically misrepresents the degree of synchrony of the population in the latter. In Fig.~\ref{fig:science_reanalysis} we observed examples where the former state space reports a much lower degree of synchrony than the latter at many time points (e.g., HB3 at 45 hours), and marginally higher synchrony at others (e.g., FVO at 60 hours). 

    \begin{figure}[!ht]
    \includegraphics[width=1.0\textwidth]{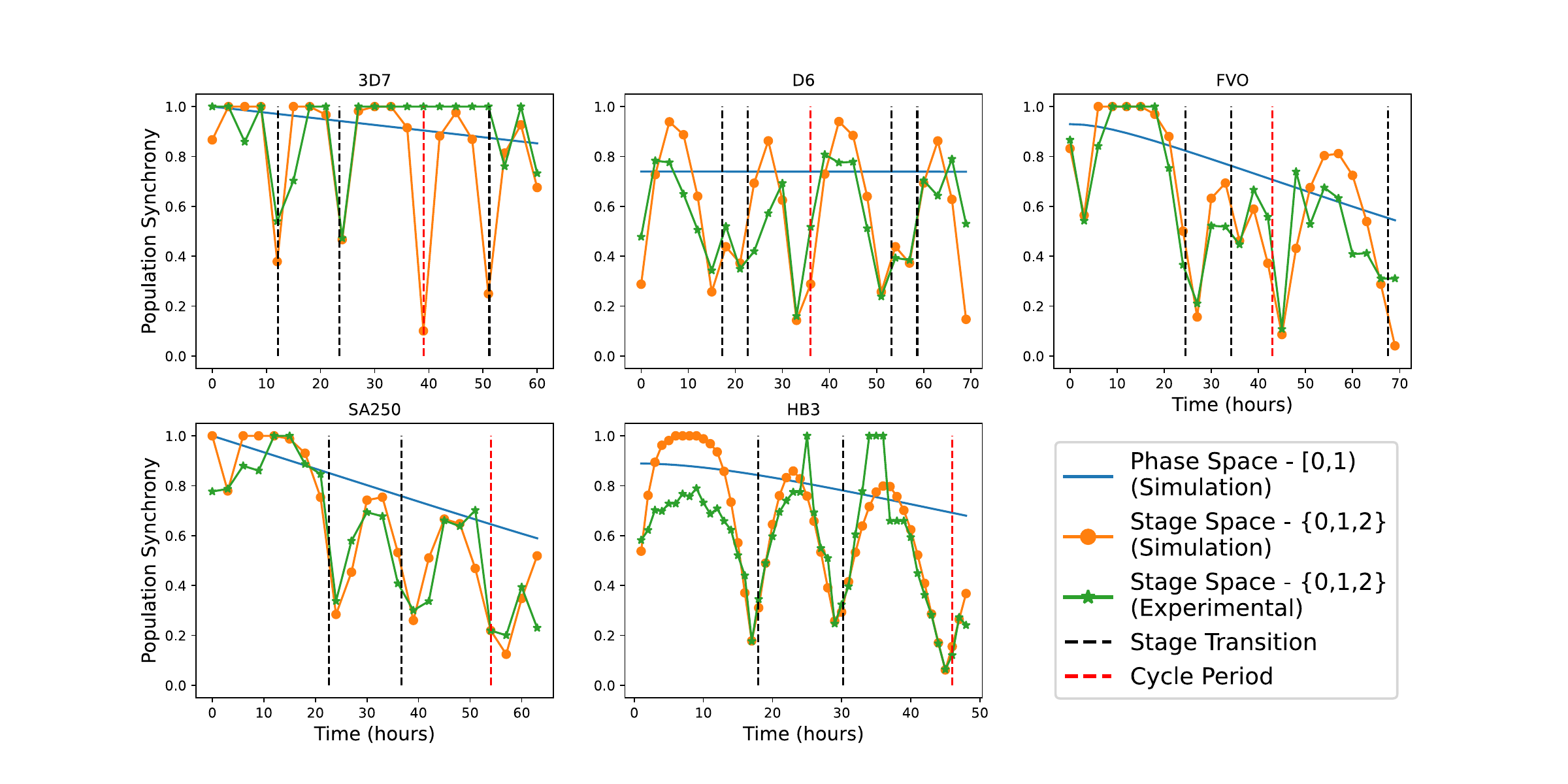}
    \caption{Measures of synchrony over time of simulated and experimentally-observed populations of \textit{P. falciparum} parasites, reported in \cite{science}. Modeled parasite populations consist of 10000 phase oscillators that progress through phases of the circle, [0,1), at randomly selected phase velocities. The circle is also discretized into morphological stages by strain-specific stage-transition thresholds, indicated by black dashed lines. Strain-specific cycle periods are indicated by red dashed lines. Strain-specific stage transition thresholds are provided in Appendix \ref{supp:pfalc_stage_spaces}.}
    \label{fig:science_reanalysis}
    \end{figure}

Despite the spurious effects of discretization of the state space in the data reported in Fig.~\ref{fig:science_reanalysis}, the maximum observed synchronies in the discrete circular spaces qualitatively track the degradation of synchrony in phase space over time caused by variable parasite IEC progression rates. 

Any spurious dynamics induced by the discrete nature of measurements may be unavoidable (and undetectable) in practice.  That said, by refining the discretization of a true state space, the effect may be greatly ameliorated. To demonstrate this, we refined the three \textit{Plasmodium} stages modeled in Fig.~\ref{fig:science_reanalysis}, by subdividing each of the stages' corresponding phase-intervals into 2 or 3 equal length bins. This yielded state spaces consisting of either 6 or 9 total stages, which could be regarded as further categorizing each morphological stage into early and late, or early, middle and late (sub)stages. The inter-state distance matrices for these state spaces are provided in Appendix  \ref{supp:pfalc_stage_spaces}. Synchrony was then computed for these (larger) discrete state spaces over time, as shown in Fig.~\ref{fig:science_reanalysis2}. For each strain, there is a reduction in the maximum deviations between synchrony measured in the continuous phase space and the 6-stage space when compared with the deviation between the phase-synchrony and synchrony measured in the coarser 3-stage space. This deviation is further reduced for the refined discretizations of the circle into 9 stages. 

These observations reflect the content of a special case of Proposition~\ref{prop:approximability}. In this setting, $M = S^1 \cong [0,1)$, which is divided into disjoint arcs, $[a_{i}, a_{i+1})$, representing observable stages (as in Example~\ref{exp:discrete_circle_general}) whose distances are determined by the arclength between midpoints of arcs. Equivalently, all phases in the arc $[a_{i}, a_{i+1})$ are assigned to the midpoint $x_i$, by $g:M\to X$, where  $X = \{x_{1},\ldots, x_{k}\}$ (for $k=3, 6$ or $9$) is a discretization of $M$. As $k$ increases, the maximum distance from any true state, $x$, to its observed state, $g(x)$, is reduced, leading to a generally better approximation of synchrony in the more refined observed spaces, as suggested by Proposition~\ref{prop:approximability}.

    \begin{figure}[!ht]
    \includegraphics[width=1.0\textwidth]{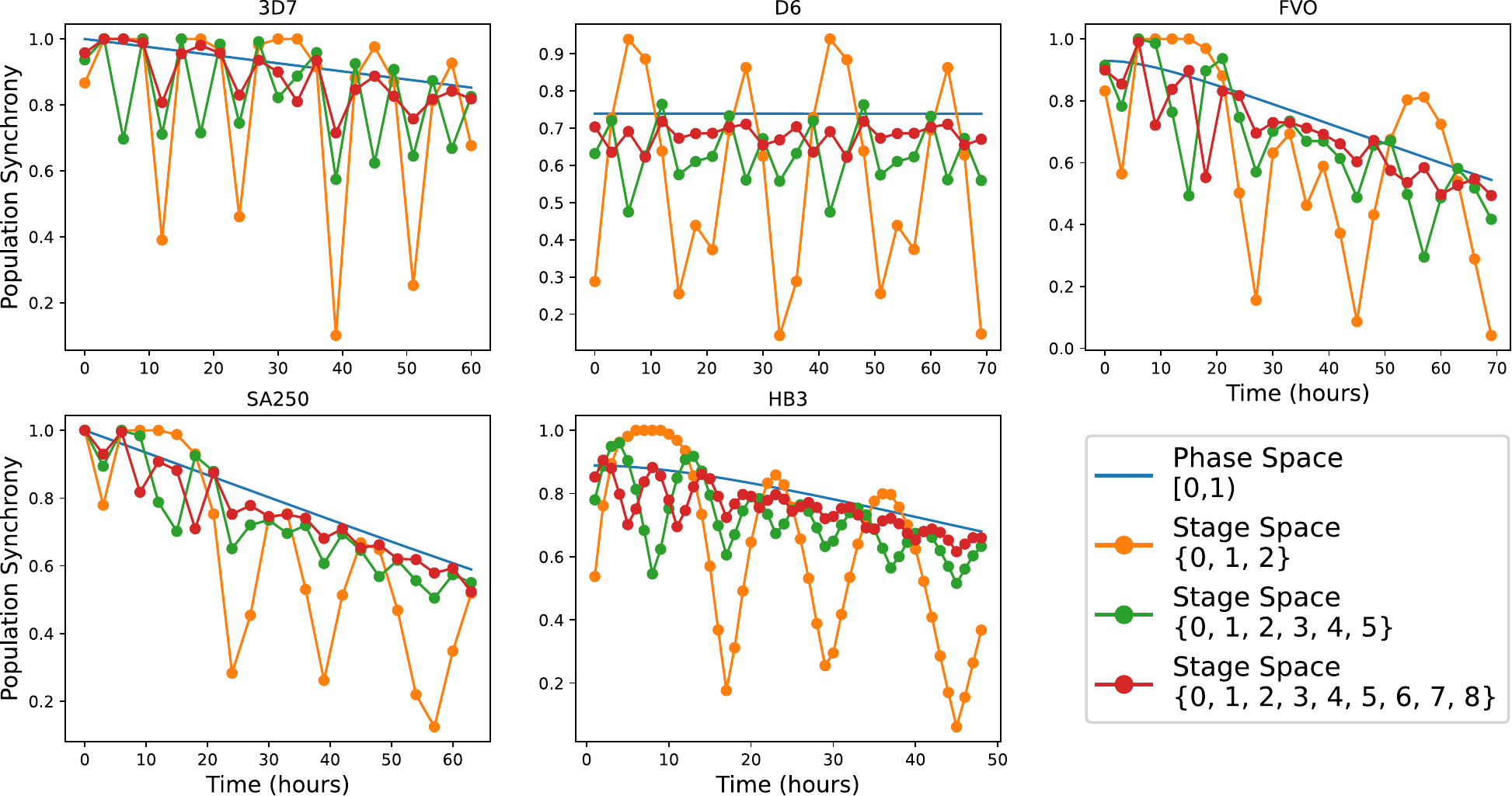}
    \caption{Measures of synchrony over time of simulated populations of \textit{P. falciparum} parasites. Modeled parasite populations consist of 10000 phase oscillators that progress through phases of the circle, [0,1), at randomly selected phase velocities. The circle is discreteized into 3, 6, or 9 morphological stages by strain-specific stage transition thresholds. Strain-specific stage transition thresholds or inter-stage distances for these discrete circular spaces are provided in Appendix \ref{supp:pfalc_stage_spaces}.}
    \label{fig:science_reanalysis2}
    \end{figure}


\section{Discussion}
\label{sec:discuss}

This paper introduces a general quantification of synchrony that captures the degree to which a population is synchronized in a chosen state space, which is required to be a compact metric space. This restriction ensures synchrony is well-defined and simplifies some of the mathematical analysis, while ensuring the results are general enough to apply broadly to many experimental settings and measurement types.

This work establishes numerous desirable mathematical properties of synchrony, including its insensitivity to measurement noise, invariance to scaling of the underlying state space, and its approximability when the true state space is not entirely accessible but is rather represented by only a subset of observable states. 

This final property highlights the synchrony measure's dependence on a choice of state space, which may be dictated by what is measurable in a given experimental context. We have shown through examples how discretization of a continuous state space can introduce error in the measure of synchrony; however, we have also shown that this error may be bounded by how well the observable states approximate the underlying space (Proposition \ref{prop:approximability}). We showcased this process through a reanalysis of a model of the discrete \textit{P. falciparum} stages measured in \cite{science}. Although such a hypothetical refinement of morphological stages is not typical in studies involving \textit{Plasmodium}, trained microscopists are able to distinguish between substages of early trophozoite stage in \textit{P. vivax}, as was done in \cite{Motta2023-pnas}. 

Population \textit{asynchrony} is related to the (normalized) minimum distance between a population distribution and any perfectly synchronized population in a state space, where distance is measured by the Wasserstein-2 distance between distributions from Optimal Transport \cite{KantorovichW2metric}. Normalization requires finding a distribution which is furthest from any delta distribution. This problem of maximizing the Wasserstein-$p$ distance to a given distribution, and the techniques and applications presented here, may be of interest to Optimal Transport researchers.

Proposition \ref{prop:approximability} is related to a large body of mathematical work concerning \textit{quantization of measure}, in which one attempts to solve for 
$$
V_{N,p}(\mu) := \inf_{\# \text{supp } \mu_N \leq N} W_{p}(\mu,\mu_N)^p
$$
and find the measure supported on at most $N$ points that ``best'' approximates a given measure, $\mu$, \cite{foundquantmeasure} (where ``best'' refers to the minimal Wasserstein-$p$ distance between the measures, $1 \leq p < \infty$). Foundational work establishes, under mild assumptions about the space supporting $\mu$, that $V_{N,p}(\mu) \rightarrow 0$ as $N$ tends to infinity (see \cite[Lemma 6.1, p. 77]{foundquantmeasure} for example). Much work has focused on determining the asymptotic rates of convergence of $V_{N,p}(\mu)$ in different settings, \cite{foundquantmeasure,Gruber2001,GRUBER2004456,COCV_2012__18_2_343_0, COCV_2016__22_3_770_0}. 
In contrast, Proposition~\ref{prop:approximability} does not report a rate of convergence of optimal quantized measures, but instead can be used to bound the error of particular quantized measures whose support is specified by a measurable function and whose preimages partition the space. In experimental settings we don't expect one will always have a choice of the support of the discretization. Indeed, in the examples presented in Section~\ref{sec:results},  \textit{Plasmodium} stages are determined by specific, discernible morphology characteristics, and so defining arbitrary stage thresholds by microscopy measurements may not be possible. 

 A recent study proposes quantifying synchrony of a population of \textit{Plasmodium} parasites during their IEC as 1 minus the ratio of the time required for 99\% of the parasite population to traverse the IEC to the IEC period length \cite{Greischar2024}. It is important to note that this measure is, in principle, influenced both by the initial distribution of parasites in the IEC and by the variability in how individual parasites progress through their IEC. That said, the analysis presented was limited to the simplified case in which all parasites are assumed to progress at the same rate. Thus, in \cite{Greischar2024}, the proposed measure captures only the population's initial synchrony. For this reason, and because the study was limited to synthetic data, the utility of the proposed synchrony measure to experimental measurements is unclear. 
 
 It has also been suggested that stage percentage curves present a biased measure of synchrony \cite{Greischar2024,GREISCHAR2019341} due to the effects of replication. This conclusion is based on the observation that even if a population of parasites is uniformly distributed in its state space and every individual progresses at the same rate through states, the proportion of parasites in ring stage will periodically increase and decrease \cite{GREISCHAR2019341}. We observe the same phenomenon in our proposed measure of synchrony: replication can induce dynamic changes in synchrony (Fig.~\ref{fig:greischar_model_popsynch}). We do not, however, conclude that this is a bias in synchrony itself. Instead, we regard it as an inevitable property of any reasonable measure of synchrony that depends on the population distribution, since, periodically, a greater percentage of individuals occupy early ring.
 
 We also expect that the periodic increase and decrease in population synchrony in response to replication depends critically on the (unrealistic) assumption that all parasites progress through states at the same rate. We expect the synchrony of a replicating population in which individuals progress at different rates will exhibit qualitatively different dynamics, and may approach a new steady state, as suggested by the non-replicating simulated populations in Fig~\ref{fig:phase_oscillator_synchloss}. Precise characterization of the impact of the modeling assumption that all parasites progress at the same rate is ongoing and beyond the scope of this paper.
 
 Ostensibly, replication-induced synchronization and desynchronization could influence our understanding of the variability with which individuals in a population progress through a state space. As a thought experiment, imagine two populations in a periodic state space, having identical initial distributions and identical variability in how individuals move through state space, but only one of which undergoes replication at a particular phase. Will replication cause the one population to lose synchrony more slowly over time than the other? Based on our findings, we expect the degree and nature of the influence of replication on the apparent changes in synchrony over time will depend on population-level factors, including 1) how the population is distributed, 2) how variably individuals progress through states; experimental factors, such as 3) when and how often measurements are taken, and 4) how the observable measurement relates to the underlying state space; and active mechanistic factors, such as 5) entrainment to a external signal(s) and/or 6) interactions between indivduals in the population. 
 
 We have shown that variability of individual progression rates can influence the dynamics of population synchrony (Fig.~\ref{fig:phase_oscillator_synchloss}), as can replication (Fig.~\ref{fig:greischar_model_popsynch}). We have also demonstrated the impact discretization of a state space can have on the dynamics of measured synchrony (Fig.~\ref{fig:synchrony_loss_discretization} and Fig.~\ref{fig:science_reanalysis}) and shown the influence of the fidelity of the discretization (Fig.~\ref{fig:science_reanalysis2}). Fully clarifying the interactions between variability of dynamics, replication, measurement limitations, and other mechanisms which may alter population synchrony is beyond the scope of this paper, but we believe that such investigations will be aided by the general, mathematically sound and rigorous definition of synchrony put forward here. Thus this work will contribute to the  growing body of work on modeling synchrony in biological populations \cite{science,RN1439, Guo2013-xh, Haase2023} and advance our understanding of the mechanisms of synchrony. 

Of particular interest, and the focus of future work, are the roles that coupling between individuals in a population and/or entrainment to an external signal may play in the establishment or maintenance of population synchrony. For example, circadian rhythms are expressed at the cellular level by intrinsic biological clocks which maintain only approximately 24-hour periods as individual oscillators \cite{RN1569}, and high levels of synchrony across diverse cell types in multicellular organisms is maintained by entrainment to external cues (e.g., light-dark cycles) \cite{RN1174}. Likewise, recent work provides evidence that \textit{Plasmodium} parasites have intrinsic clocks controlling precise periods of their IEC \cite{science} and that there is alignment between this clock phase and the circadian phase in humans \cite{Motta2023-pnas}. This observation suggests a conceptual model which may explain synchronous bursting of parasites in the IEC due to entrainment of parasite clocks to host circadian signals \cite{Motta2023-pnas}. We expect that the quantification of synchrony in this work will aid future modeling and experimental efforts that seek to elucidate conceptual and biochemical mechanisms responsible for maintaining synchrony in these and other related systems.

The notion of synchrony proposed in this work can be considered as a statistic of a probability distribution on a metric space. Given the large volume of work devoted to understanding statistics on manifolds and other metric spaces (e.g., see the books \cite{dryden2016statistical,pennec2019riemannian,small2012statistical}), we believe that a full investigation of the statistical properties of synchrony may provide another interesting avenue for future mathematics research.

\section*{Author contributions}

We adopt the CRediT taxonomy (https://credit.niso.org/) to specify author contributions: 

\noindent \textbf{Francis C. Motta:} Conceptualization, Methodology, Software, Validation, Formal analysis, Resources, Data Curation, Writing - Original Draft, Writing - Review \& Editing, Visualization, Project administration. \textbf{Kevin McGoff:} Conceptualization, Methodology,  Validation, Formal analysis, Writing - Original Draft, Writing - Review \& Editing \textbf{Breschine Cummins:} Conceptualization, Writing - Original Draft, Writing - Review \& Editing, Visualization \textbf{Steven B. Haase} Conceptualization, Writing - Review \& Editing.

\section*{Acknowledgements}

\noindent \textbf{Funding:} KM gratefully acknowledges funding from the National Science Foundation [DMS-1847144 and DMS-2113676].

\vspace{5mm}

\noindent \textbf{Declarations of interest:} Kevin McGoff reports financial support was provided by National Science Foundation. Steven B. Haase reports a relationship with Geometric Data Analytics, Inc. that includes: consulting or advisory and non-financial support. Francis Motta reports a relationship with Geometric Data Analytics, Inc. that includes: employment. Kevin McGoff reports a relationship with Geometric Data Analytics, Inc. that includes: employment. If there are other authors, they declare that they have no known competing financial interests or personal relationships that could have appeared to influence the work reported in this paper.

\section*{Code and Data Availability}
All data and code used to produce the results and figures in this study are provided in a public GitLab repository \cite{datacoderepo}. 






\appendix

\newpage
\section{Proof of Proposition \ref{prop:continuity}}
\label{app:proof_continuity}
\renewcommand{\thefigure}{A.\arabic{figure}}
\renewcommand{\theHfigure}{A.\arabic{figure}}
\renewcommand{\thetable}{A.\arabic{table}}
\renewcommand{\theHtable}{A.\arabic{table}}
\setcounter{figure}{0}
\setcounter{table}{0}

\continuity*

\begin{proof}
Because $(M,d)$ is compact, it follows that $\mathcal{P}(M)$ is also compact and metrizable with the weak$^*$ topology. By definition of $F$ (Eq.~\eqref{eq:synchrony}), it suffices to show that $V : \mathcal{P}(M) \to \R$ is continuous, where $V(\pi) = \Var(\pi)$.

For each $\alpha \in M$, define $f_{\alpha} : M \to \R$ by setting $f_{\alpha}(x) = d(x,\alpha)^2$. Note that $f_{\alpha}$ is bounded and continuous. Then define $F_{\alpha} : \mathcal{P}(M) \to \R$ by setting
\begin{equation*}
    F_{\alpha}(\pi) = \int f_{\alpha} \, d\pi.
\end{equation*}
Note that $F_{\alpha}$ is continuous (by definition of the weak$^*$ topology). Then we have
\begin{equation*}
    V(\pi) = \inf_{\alpha} F_{\alpha}(\pi),
\end{equation*}
and therefore $V$ is upper semi-continuous (as it is the pointwise infimum of continuous functions).

Let us now prove that $V$ is continuous.
Let $\{\pi_n\}$ be a sequence in $\mathcal{P}(M)$ that converges to $\pi$. Since $V$ is non-negative and upper semi-continuous, we have that 
\begin{equation*}
    0 \leq \limsup V(\pi_n) \leq V(\pi).
\end{equation*}
Let $L = \limsup V(\pi_n)$. We claim that
\begin{equation*}
    \lim V(\pi_n) = L = V(\pi).
\end{equation*}
In order to establish this fact, let us show that every subsequence $\{\pi_{n_k}\}$ has a further subsequence $\{\pi_{n_{k_j}}\}$ such that $\lim V(\pi_{n_{k_j}}) = L$ and $L = V(\pi)$.

Let $\{\pi_{n_k}\}$ be a subsequence of $\{\pi_n\}$. By the compactness of $M$ and the continuity of the map $\alpha \mapsto F_{\alpha}(\pi_{n_k})$ for each $k$ (resulting from the uniform dependence of $f_{\alpha}$ on $\alpha$), there exists $\alpha_k \in M$ such that 
\begin{equation*}
    V(\pi_{n_k}) = \int f_{\alpha_k} \, d\pi_{n_k}.
\end{equation*}
By the sequential compactness of $M$, there exists a subsequence $\{\alpha_{k_j}\}$ and a limit $\alpha \in M$ such that $\lim \alpha_{k_j} = \alpha$. Note that $\{f_{\alpha_{k_j}}\}$ converges uniformly on $M$ to $f_{\alpha}$.

Let $\epsilon >0$. Since $\{\pi_{n_{k_j}}\}$ converges to $\pi$ and $f_{\alpha}$ is continuous, there exists $J_1$ such that for all $j \geq J_1$, we have
\begin{equation*}
    \left| \int f_{\alpha} \, d \pi_{n_{k_j}} - \int f_{\alpha} \, d\pi \right| < \epsilon/2.
\end{equation*}
Also, since $\{f_{\alpha_{k_j}}\}$ converges uniformly to $f_{\alpha}$, there exists $J_2$ such that for all $j \geq J_2$ we have 
\begin{equation*}
\|f_{\alpha_{k_j}} - f_{\alpha}\| < \epsilon/2.
\end{equation*}
Then for $j \geq \max(J_1,J_2)$, we have
\begin{align*}
    \left| V(\pi_{n_{k_j}}) - F_{\alpha}(\pi) \right| & \leq \left| \int f_{\alpha_{k_j}} \, d\pi_{n_{k_j}} - \int f_{\alpha} \, d\pi_{n_{k_j}}\right| + \left| \int f_{\alpha} \, d\pi_{n_{k_j}} - \int f_{\alpha} \, d\pi \right| \\
    & \leq \| f_{\alpha_{k_j}} - f_{\alpha} \| + \epsilon/2 \\
    & < \epsilon/2 + \epsilon/2 = \epsilon.
\end{align*}
Thus, we have shown that $\lim V(\pi_{n_{k_j}}) = F_{\alpha}(\pi)$.
Then by this fact, the definition of $L$, the upper semi-continuity of $V$, and the definition of $V$ as an infimum, we see that
\begin{equation*}
    F_{\alpha}(\pi) = \lim V(\pi_{n_{k_j}}) \leq L \leq V(\pi) \leq F_{\alpha}(\pi).
\end{equation*}
As the left and right expressions in the previous chain of inequalities are equal, we see that all of these terms are equal, which establishes the desired result.
\end{proof}

\newpage 
\section{Proof of Proposition~\ref{prop:vardelta}}
\label{app:vardelta}
\renewcommand{\thefigure}{B.\arabic{figure}}
\renewcommand{\theHfigure}{B.\arabic{figure}}
\renewcommand{\thetable}{B.\arabic{table}}
\renewcommand{\theHtable}{B.\arabic{table}}
\setcounter{figure}{0}
\setcounter{table}{0}

\vardelta*

\begin{proof}
If $\pi = \delta_{\alpha}$ for some $\alpha \in M$, then by definition
$0 \leq \Var(\pi) = \inf_{x \in M} W_2^2(\delta_x, \delta_{\alpha}) \leq  W_2^2(\delta_{\alpha}, \delta_{\alpha}) = 0$, and so $\Var(\pi)=0$.

For the converse, assume $\pi \neq \delta_{\alpha}$ for any $\alpha \in M$, and suppose that $\alpha^*$ satisfies 
$$
\begin{aligned}
\Var(\pi) & = \inf_{\alpha \in M} \int_M d(x,\alpha)^2 \; d\pi(x) \\
& = \int_M d(x,\alpha^*)^2 \; d\pi(x).
\end{aligned}
$$
Since $\pi \neq \delta_{\alpha^*}$, we must have $\pi(M \setminus \{\alpha^*\}) > 0$. Let $U = M \setminus \{\alpha^*\}$, and let 
\begin{equation*}
    U_n = \{ x \in M : d(x,\alpha^*)^2 > 1/n \}.
\end{equation*}
Note that $U = \bigcup_n U_n$, and since $\pi(U) >0$, there exists $n$ such that $\pi(U_n) >0$. Then 
$$
\begin{aligned}
\Var(\pi) & = \int_{M} d(x,\alpha^*)^2 \; d\pi(x) \\ 
& \geq \int_{U_n} d(x,\alpha^*)^2 \; d\pi(x) \\
& \geq \pi(U_n)/n \\
& >0,
\end{aligned}
$$
as desired.
\end{proof}

\newpage

\section{Proof of Proposition~\ref{prop:metric_rescale}}
\label{app:metric_rescale}
\renewcommand{\thefigure}{C.\arabic{figure}}
\renewcommand{\theHfigure}{C.\arabic{figure}}
\renewcommand{\thetable}{C.\arabic{table}}
\renewcommand{\theHtable}{C.\arabic{table}}
\setcounter{figure}{0}
\setcounter{table}{0}

\metricrescale*

\begin{proof}
Observe that the metric topologies defined on $M$ by $d$ and $d_{\lambda}$ are identical and therefore the Borel $\sigma$-algebras are the same and $\mathcal{P}(M;d)=\mathcal{P}(M;d_{\lambda})$. By definition, $\Var_{(M,d_\lambda)}(\pi) = \lambda^2 \Var_{(M,d)}(\pi)$. Therefore, if $\nu = \sup_{\pi' \in \mathcal{P}(M)} \Var_{(M,d)}(\pi')^{1/2}$ is the normalization constant over $(M,d)$, then $\lambda \nu  = \sup_{\pi' \in \mathcal{P}(M)} \Var_{(M,d_\lambda)}(\pi')^{1/2}$ is the normalization constant over $(M,d_\lambda)$ and so $$F_{(M,d_\lambda)}(\pi) = 1- \frac{\Var_{(M,d_\lambda)}(\pi)^{1/2}}{\lambda \nu} = 1-\frac{\lambda \Var_{(M,d)}(\pi)^{1/2}}{\lambda \nu} = F_{(M,d)}(\pi).$$
\end{proof}

\newpage
\section{Proof of Proposition \ref{prop:approximability}}
\label{app:approximability}
\renewcommand{\thefigure}{D.\arabic{figure}}
\renewcommand{\theHfigure}{D.\arabic{figure}}
\renewcommand{\thetable}{D.\arabic{table}}
\renewcommand{\theHtable}{D.\arabic{table}}
\setcounter{figure}{0}
\setcounter{table}{0}

\approximability*

\begin{proof}
 Assume first that $X$ also contains at least two points, so that  both $\nu_{(M,d)}$ and $\nu_{(X,d)}$ are strictly greater than zero. Let $W_2(\pi,\mu)$ be the Wasserstein-$2$ distance between $\pi, \mu \in \mathcal{P}(M)$. 
For any measure $\pi$ on $M$, define
\begin{align*}
    V(\pi) := \Var_{(M,d)}(\pi)^{1/2} = \inf_{{\alpha} \in M} W_2(\pi,\delta_{\alpha}),
\end{align*}
and also let
\begin{align*}
    \nu := \nu_{(M,d)} = \sup_{\pi \in \mathcal{P}(M)} V(\pi).
\end{align*}
Likewise, for any measure $\pi_X$ on $X$, define
\begin{equation*}
    V_X(\pi_X) := \Var_{(X,d)}(\pi_X)^{1/2}  = \inf_{x \in X} W_2(\pi_X,\delta_{x}),
\end{equation*}
and also let
\begin{equation*}
    \nu_X := \nu_{(X,d)} = \sup_{\pi_X \in \mathcal{P}(X)} V_X(\pi_X).
\end{equation*}

    First, note that for any measure $\pi$ on $M$, there is a coupling, $\gamma_0 \in \Gamma[\pi,\pi_X]$ (with marginals equal to $\pi$ and $\pi_X$) supported on the graph of $g$, i.e., the set $\text{graph}(g) = \{(x,g(x)) \in M \times X\; |\; x \in M\}$. Then, since $d(x,g(x)) \leq \epsilon$ for all $x$, we have 
    $$
    \begin{aligned}
    W_2(\pi, \pi_X) & = \left(\inf_{\gamma \in \Gamma[\pi, \pi_X]}\int_{M\times X} d(x,y)^2 \; d\gamma(x,y)\right)^{1/2} \\
    & \leq \left(\int_{M\times X} d(x,y)^2 \; d\gamma_0(x,y)\right)^{1/2} \\
    & = \left(\int_{\text{graph}(g)} d(x,y)^2 \; d\gamma_0(x,y)\right)^{1/2} \\
    & \leq \left(\epsilon^2  \int_{M\times X}\; d\gamma_0(x,y)\right)^{1/2} \\
    & = \epsilon.
    \end{aligned}
    $$
    Then, by the triangle inequality,  for any $\alpha \in M$, and any $\pi \in \mathcal{P}(M)$, we have that
    \begin{equation*}
        W_2(\pi, \delta_{\alpha}) \leq W_2(\pi_X, \delta_{\alpha}) + W_2(\pi, \pi_X) \leq W_2(\pi_X, \delta_{\alpha}) + \epsilon.
    \end{equation*}
    Taking the infimum over all $\alpha$ in $M$, we obtain
    \begin{align} \begin{split} \label{eqn:ineqside11}
        V(\pi) & \leq \inf_{\alpha \in M} W_2(\pi_X, \delta_{\alpha}) + \epsilon \\
        & \leq \inf_{x \in X} W_2(\pi_X, \delta_{x}) + \epsilon \\
        & = V_X(\pi_X) + \epsilon.
        \end{split}
    \end{align}
    Now observe that for any $\alpha$ in $M$ and $\pi$ on $M$, the triangle inequality gives
    \begin{align*}
        W_2(\pi_X, \delta_{g(\alpha)}) & \leq W_2(\delta_{g(\alpha)}, \delta_{\alpha}) + W_2(\pi, \delta_{\alpha}) + W_2(\pi, \pi_X) \\
        & \leq W_2(\pi, \delta_{\alpha}) + 2 \epsilon.
    \end{align*}
    Choosing $\alpha^*$ such that $W_2( \pi, \delta_{\alpha^*}) = V(\pi)$, we see that
    \begin{equation}
    \label{eqn:ineqside12}
        V_X(\pi_X) = \inf_{x \in X} W_2(\pi_X,\delta_x) \leq W_2(\delta_{g(a^*)}, \pi_X) \leq W_2(\delta_{a^*}, \pi) + 2 \epsilon = V(\pi) + 2 \epsilon.
    \end{equation}
    Combining inequalities \eqref{eqn:ineqside11} and \eqref{eqn:ineqside12}, we obtain
    \begin{equation} \label{eqn:FirstBound}
        |V(\pi) - V_X(\pi_X)| \leq 2\epsilon.
    \end{equation}

    Now let $\pi'$ be a measure on $X$. Choose $\alpha^*$ in $M$ such that $V(\pi') = W_2(\pi', \delta_{a^*})$. Then
    \begin{equation*}
        V_X(\pi') = \inf_{x \in X} W_2(\pi', \delta_x) \leq W_2(\pi', \delta_{g(\alpha^*)}) \leq W_2(\pi', \delta_{\alpha^*}) + \epsilon = V(\pi') + \epsilon.
    \end{equation*}
    Taking the supremum over $\pi'$ on $X$, we obtain
    \begin{equation}
    \label{eqn:ineqside21}
        \nu_X \leq \sup_{\pi' \in \mathcal{P}(X)} V(\pi') + \epsilon \leq \sup_{\pi' \in \mathcal{P}(M)} V(\pi') + \epsilon = \nu + \epsilon.
    \end{equation}
    Now taking supremum over all $\pi$ on $M$ in \eqref{eqn:ineqside11}, we also have
    \begin{equation}
    \label{eqn:ineqside22}
        \nu \leq \nu_X + \epsilon.
    \end{equation}
    Combining inequalities \eqref{eqn:ineqside21} and \eqref{eqn:ineqside22}, we see that
    \begin{equation} \label{eqn:SecondBound}
        |\nu - \nu_X| \leq \epsilon.
    \end{equation}
    Finally, by Eq.~\eqref{eqn:FirstBound} and Eq.~\eqref{eqn:SecondBound} we conclude that
    \begin{align*}
        \left|F_{(M,d)}(\pi) - F_{(X,d)}(\pi_X) \right| & =  \left| \frac{1}{\nu} V(\pi) - \frac{1}{\nu_X} V_X(\pi_X) \right| \\
        & =\frac{1}{\nu \cdot \nu_X}\left| \nu_X V(\pi) - \nu V_X(\pi_X) \right| \\
        & \leq  \frac{1}{\nu \cdot \nu_X}\left| \nu_X V(\pi) - \nu_X V_X(\pi_X) \right| \\
         & \quad + \frac{1}{\nu \cdot \nu_X} \left| \nu_X V_X(\pi_X)  - \nu V_X(\nu_X) \right| \\
         & \leq  \frac{1}{\nu}\left| V(\pi) - V_X(\pi_X) \right| + \frac{1}{\nu} \left| \nu_X   - \nu  \right| \\
         & \leq \frac{3\epsilon}{\nu},
    \end{align*}
    which finishes the proof for the case that $X$ contains at least two points.

    If $X = \{s\}$ consists of a single state, then 
    \begin{align*}
        \left|F_{(M,d)}(\pi) - F_{(X,d)}(\pi_X) \right| & = \left|\frac{1}{\nu}V(\pi) - 1 \right| \\
        & =\frac{1}{\nu} \left|V(\pi) - \nu\right| \\ 
        & \leq 1,
    \end{align*}
    since $|\Var(\pi)-\nu| \leq \nu$. But in this case $\epsilon \geq \max_{\alpha \in M} d(\alpha, s)$, and so
    \begin{align*}
    \nu & = \sup_{\pi \in \mathcal{P}(M)}  V(\pi) \\
    & = \sup_{\pi \in \mathcal{P}(M)} \left(\inf_{y \in M} \int_{M} d(x,y)^2 \; d\pi(x) \right)^{1/2} \\
    & \leq \sup_{\pi \in \mathcal{P}(M)} \left(\int_{M} d(x,s)^2 \; d\pi(x) \right)^{1/2} \\
    & \leq \sup_{\pi \in \mathcal{P}(M)} \left( \int_{M} \epsilon^2 \; d\pi(x) \right)^{1/2} \\
    & = \sup_{\pi \in \mathcal{P}(M)} \epsilon \left( \int_{M} \; d\pi(x)  \right)^{1/2} \\
    & = \epsilon.
    \end{align*}
    Therefore $1 \leq \epsilon/\nu < 3\epsilon/\nu$, and the conclusion still holds. 
\end{proof}

\newpage
\section{Proof of Proposition \ref{prop:finite_cyclic_var_max}}
\label{app:finite_cyclic_var_max}
\renewcommand{\thefigure}{E.\arabic{figure}}
\renewcommand{\theHfigure}{E.\arabic{figure}}
\renewcommand{\thetable}{E.\arabic{table}}
\renewcommand{\theHtable}{E.\arabic{table}}
\setcounter{figure}{0}
\setcounter{table}{0}

\begin{lem}
Let $\bm{C}$ be a circulant, $p\times p$ matrix indexed by $0,\ldots,p-1$, so that the $i$-th row, $\bm{C}[i,:]$, is the cyclic shift of $\bm{C}[0,:]$ by $i$ indices to the right, i.e., $\bm{C}[i,j] = \bm{C}[0,(j-i)\mod p]$ for $i,j=0,\ldots, p-1.$ Let $\bm{1} \in \mathbb{R}^p$ be the (column) vector of all 1's. Then for any (column) vector ${\bm v} \in \mathbb{R}^p, \bm{1}^T \bm{C}\bm{v} = C (\bm{1}^T \bm{v}),$ where $C$ is the constant row (and column) sum of $\bm{C}$. In particular, if $\bm{v}$ is a probability vector (its components are non-negative and sum to 1), then $\bm{1}^T \bm{C}\bm{v} = C$ and is independent of $\bm{v}$. 
\label{lem:const_circ_sum}
\end{lem}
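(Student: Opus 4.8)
The plan is to reduce the identity to the statement that $\bm{1}$ is a common left and right eigenvector of $\bm{C}$ with eigenvalue $C$, after which everything follows from associativity of matrix multiplication. The only real content is verifying that a circulant matrix has constant row and column sums.

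First I would establish that $\bm{C}$ has constant row and column sums. Fix a row index $i$. Because cyclic shift by $i$ permutes the index set, the map $j \mapsto (j-i)\mod p$ is a bijection of $\{0,\ldots,p-1\}$ onto itself, and reindexing the row sum gives
$$\sum_{j=0}^{p-1}\bm{C}[i,j] \;=\; \sum_{j=0}^{p-1}\bm{C}[0,(j-i)\mod p] \;=\; \sum_{k=0}^{p-1}\bm{C}[0,k] \;=:\; C,$$
which is independent of $i$. Running the identical argument with the column index $j$ held fixed (using that $i \mapsto (j-i)\mod p$ is again a bijection) shows that each column also sums to $C$. Thus $C$ is simultaneously the common row sum and the common column sum, as asserted in the statement.

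The constancy of the column sums is precisely the relation $\bm{1}^T\bm{C} = C\,\bm{1}^T$. Hence, for any $\bm{v}\in\mathbb{R}^p$, associativity of matrix multiplication yields
$$\bm{1}^T\bm{C}\bm{v} \;=\; (\bm{1}^T\bm{C})\bm{v} \;=\; C\,\bm{1}^T\bm{v} \;=\; C(\bm{1}^T\bm{v}),$$
and when $\bm{v}$ is a probability vector we have $\bm{1}^T\bm{v}=\sum_{j}\bm{v}[j]=1$, giving $\bm{1}^T\bm{C}\bm{v}=C$ independently of $\bm{v}$. I do not anticipate any substantive obstacle here: the entire argument rests on the reindexing step, where the only thing requiring care is confirming that a fixed cyclic shift is a permutation of $\{0,\ldots,p-1\}$, so that summing the shifted first-row entries recovers the full first-row sum.
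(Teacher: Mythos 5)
Your proof is correct and follows essentially the same route as the paper's: both arguments amount to the observation that $\bm{1}^T\bm{C} = C\,\bm{1}^T$ (the paper writes this out as an explicit interchange of the order of summation, while you package it as $\bm{1}$ being a left eigenvector). You are in fact slightly more complete, since you justify the constancy of the row and column sums via the reindexing bijection $j \mapsto (j-i) \bmod p$, a step the paper takes as implicit in the word ``circulant.''
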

\begin{proof}
\begin{align*}
\bm{1}^T \bm{C}\bm{v} & = \sum_{j=0}^{p-1}\left(\sum_{i=0}^{p-1}\bm{C}[i,j]\bm{v}[i]\right)\\
& = \sum_{i=0}^{p-1}\left(\sum_{j=0}^{p-1}\bm{C}[i,j]\bm{v}[i]\right)\\
& = \sum_{i=0}^{p-1}\bm{v}[i]\left(\sum_{j=0}^{p-1}\bm{C}[i,j]\right)\\
& = \sum_{i=0}^{p-1}\bm{v}[i] C \\
& = C\sum_{i=0}^{p-1}\bm{v}[i] \\
& = C (\bm{1}^T \bm{v}). \\
\end{align*}
\end{proof}

\begin{lem}
Assume $\sum_{i=0}^{p-1}a_i = p\lambda$. If there exists $a_j$ such that $a_j \neq \lambda$, then there exists $a_k < \lambda$.
\label{lem:smaller_min}
\end{lem}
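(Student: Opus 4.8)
The plan is to argue by contraposition, since the statement is really just the observation that $\lambda$ is the average of the $a_i$. I would assume that \emph{no} index $k$ satisfies $a_k < \lambda$, i.e., that $a_i \geq \lambda$ for every $i \in \{0,\ldots,p-1\}$, and then show this forces $a_i = \lambda$ for all $i$ (so that no $a_j \neq \lambda$ can exist).

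The key step is to examine the nonnegative quantities $a_i - \lambda$. Under the standing assumption each satisfies $a_i - \lambda \geq 0$, while their sum is
\begin{equation*}
\sum_{i=0}^{p-1} (a_i - \lambda) = \left(\sum_{i=0}^{p-1} a_i\right) - p\lambda = p\lambda - p\lambda = 0.
\end{equation*}
A finite sum of nonnegative real numbers equals zero only when every summand is zero, so $a_i - \lambda = 0$, i.e.\ $a_i = \lambda$, for each $i$. This is exactly the contrapositive of the claim: if every $a_i \geq \lambda$ then every $a_i = \lambda$, so the existence of some $a_j \neq \lambda$ guarantees some $a_k < \lambda$.

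There is no genuine obstacle here; the lemma is an elementary averaging fact and the only point requiring any care is the logical bookkeeping of the contrapositive together with the finiteness of $p$, which validates the ``sum of nonnegatives is zero'' step. If a direct argument were preferred instead, one could split on the sign of $a_j - \lambda$: if $a_j < \lambda$ take $k = j$, and otherwise $a_j > \lambda$ forces $\sum_{i \neq j} a_i = p\lambda - a_j < (p-1)\lambda$, so the remaining $p-1$ terms have average below $\lambda$ and at least one must lie strictly below $\lambda$.
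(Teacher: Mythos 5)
Your proof is correct and rests on the same elementary averaging identity $\sum_{i=0}^{p-1}(a_i-\lambda)=0$ that the paper uses; you package it as a contrapositive (``all $a_i\geq\lambda$ forces all $a_i=\lambda$'') while the paper runs a direct case split on the sign of $a_j-\lambda$ and derives a contradiction in the case $a_j>\lambda$, but these are the same argument in different logical dress. Indeed, the alternative direct argument you sketch in your final sentence is essentially verbatim the paper's proof.
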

\begin{proof}
If $a_j < \lambda$, the statement if true, taking $k=j$. So assume $a_j >\lambda$. If $a_i \geq \lambda$ for all $i\neq j$, then
\begin{align*}
\begin{aligned}
p\lambda & = a_j + \sum_{i\neq j}a_i\\
& \geq a_j + \lambda(p-1)\\
& \implies a_j \leq \lambda,
\end{aligned}
\end{align*}
\noindent a contradiction. Thus, again there exists $a_k < \lambda$.
\end{proof}

\discretecircle*

\begin{proof}
First recall that for any $\pi'\in\mathcal{P}(M)$.
$$\Var(\pi') = \inf_{\alpha \in M} \int_{M}d(x,\alpha)^2 d\pi'(x) = \min_{i\in\{0,\ldots, p-1\}} \sum_{j=0}^{p-1} \bm{C}[i,j]\pi'[j],$$
where $\bm{C}[i,j] = d(i,j)^2= \min(|i-j|/p,1-|i-j|/p)^2$ is the $p\times p$ circulant cost matrix of the optimal transport problem, encoding the squared circular distances between all pairs of states. This follows from the observation that the $i$-th component of the vector $\bm{C}\pi'$ is the Wasserstein-2 distance between $\pi'$ and the delta distribution concentrated on state $i$. Thus, the generalized variance of $\pi'$ is simply the smallest component of the vector $\bm{C}\pi'$. Let $C = \sum_{i=0}^{p-1}\min(i/p,1-i/p)^2$ be the constant row sum of $\bm{C}$. Then, for the uniform measure, $\pi = \frac{1}{p}\bm{1}$, the vector $\bm{C}\pi = \frac{C}{p}\bm{1}$, has constant components (since $\bm{C}$ is circulant), all equal to $C/p$. Therefore $\Var(\pi)=C/p$. 

Let $\pi'$ be any probability distribution on $M$. Then $\sum_{i=0}^{p-1} \bm{C}\pi'[i] = C$, by Lemma~\ref{lem:const_circ_sum}. If there exists $j$  such that $\bm{C}\pi'[j] \neq C/p$, then there must exist $k$ such that $\bm{C}\pi'[k] < C/p$ by Lemma~\ref{lem:smaller_min}. Thus $\Var(\pi') < C/p = \Var(\pi)$ if not all coordinates of $\bm{C}\pi'$ are equal. Now assume that all coordinates of $\bm{C}\pi'$ are equal (to say $\lambda$). Then $C = p\lambda \implies \lambda = C/p$ and therefore $\Var(\pi') = \Var(\pi)$. In either case, $\Var(\pi') \leq \Var(\pi).$
\end{proof}

\begin{rmk}
The condition that guarantees the uniform distribution will have maximal generalized variance in the proof of Proposition \ref{prop:finite_cyclic_var_max} is that the distance matrix for a discrete uniform circle space is circulant. The distance matrix, and thus the Wasserstein-2 cost matrix, for general discrete circle spaces as defined in Example~\ref{exp:discrete_circle_general}, need not be circulant and for this reason the uniform distribution need not maximize asynchrony in such spaces. 

Even for spaces whose distance matrix is circulant, the only guarantee made by Proposition \ref{prop:finite_cyclic_var_max} is that the uniform distribution will maximize asynchrony, not that it is the unique maximizer. For example, consider the 4-state space with circulant Wasserstein-2 cost matrix:
$$
\bm{C} = \begin{bmatrix}
0 & 1 & 2 & 1  \\
1 & 0 & 1 & 2  \\
2 & 1 & 0 & 1  \\
1 & 2 & 1 & 0 
\end{bmatrix}.
$$
Then $\bm{C}\pi = \bm{C} \pi' = [1,1,1,1]$, for
$\pi = [1/4, 1/4, 1/4, 1/4]$ and $\pi' = [1/2, 0, 1/2, 0]$. Thus the maximum generalized variance, which in this case is 1, is not solely achieved by the uniform distribution.

 Uniqueness of the maximizer would be implied if whenever $C \pi$ has constant components, then $\pi$ had constant components. By the preceeding example, we have to appeal to more than circularity of $\bm{C}$ to guarantee this implication. For example, if $\bm{C}$ is circulant and invertible, with row sum $C$, then there will of course be a unique solution to $\bm{C}\pi = [1,\ldots,1]$, given by $\pi = \bm{C}^{-1} [1,\ldots,1] = \bm{1}/C$, the uniform distribution (after normalization). If and why the particular Wasserstein-2 cost matrix for the discrete uniform circular space is invertible is beyond the scope of this paper.

\end{rmk}


\newpage
\section{Proof of Proposition \ref{prop:finite_state_lp_problem}} 
\label{app:finite_state_lp_problem}
\renewcommand{\thefigure}{F.\arabic{figure}}
\renewcommand{\theHfigure}{F.\arabic{figure}}
\renewcommand{\thetable}{F.\arabic{table}}
\renewcommand{\theHtable}{F.\arabic{table}}
\setcounter{figure}{0}
\setcounter{table}{0}

\lpproblem*

\begin{proof}
Let $M=\{0,\ldots,p-1\}$ be a finite metric space with Wasserstein-2 cost matrix $C[i,j] = d(i,j)^2$. 
For each $0 \leq i \leq p-1$, let $f_i:\R^{p} \to \R$ be the linear function $f_i(\pi) = \sum_{j=0}^{p-1}\bm{C}[i,j]\pi[j]$. In other words, if $\pi$ is a probability vector, then $f_i(\pi)$ measures the squared Wasserstein-2 distance between $\pi$ and $\delta_i = [0,\ldots, 1, \ldots, 0]$. Then the optimization problem
    \begin{equation}
    \nu_{(M,d)}^2 = \max_{\pi \in \mathcal{P}(M)} \left\{\min_{0 \leq i \leq p-1} \sum_{j=0}^{p-1}\bm{C}[i,j]\pi[j]\right\}
    \end{equation}
\noindent can be recast as a linear programming problem
    \begin{align}
    \begin{aligned}
        \text{maximize: } & t \\
            \text{subject to: } & t \leq f_{i}(\pi) \\
                & \sum_{i=0}^{p-1} \pi[i] = 1 \\
                & \pi[i] \geq 0.
    \label{eq:norm_const_finite_state_lp}
    \end{aligned}
    \end{align}
    The variable $t$ plays the role of the generalized variance that we are attempting to maximize. Thus, the inequality constraints $t \leq f_{i}(\pi)$ enforce the condition that the generalized variance of a probability distribution (vector) $\pi$ depends only on the \textit{minimal} Wasserstein-2 distance between $\pi$ and any delta distribution. The constraints $\sum_{i=0}^{p-1} \pi[i] = 1$ and $\pi[i] \geq 0$ ensure optimization is done only over probability vectors. \eqref{eq:norm_const_finite_state_lp} can then be written in a more standard form as a minimization LP problem
    \begin{align}
    \begin{aligned}
        \text{minimize  } & c^{\intercal}x \\
        \text{subject to  } & \bm{A_{ub}}x \leq [0,\ldots,0] \\
                & \bm{A_{eq}}x = 1\\
                & 0 \leq x[i], i=0,\ldots,p-1
    \label{eq:norm_const_finite_state_lp_standard}
    \end{aligned}
    \end{align}
    where $x = [\pi[0],\ldots, \pi[p-1], t] \in \mathbb{R}^{p+1}$ is the vector of unknowns, $c = [0,\ldots, 0, -1] \in \mathbb{R}^{p+1}$, $A_{ub} = [-\bm{C};\bm{1}] \in \mathbb{R}^{p \times (p+1)}$ is the negative Wasserstein-2 cost matrix adjoined a column of 1s, and the equality constraint $A_{eq} = [1,\ldots,1,0]$ ensures $\pi$ is a probability vector. 
\end{proof}


\newpage

\section{Squared Wasserstein-2 distance between an empirical distribution and delta distributions on the circle}
\label{app:w22_circle_example}
\renewcommand{\thefigure}{G.\arabic{figure}}
\renewcommand{\theHfigure}{G.\arabic{figure}}
\renewcommand{\thetable}{G.\arabic{table}}
\renewcommand{\theHtable}{G.\arabic{table}}
\setcounter{figure}{0}
\setcounter{table}{0}

\begin{figure}[!ht]
\includegraphics[width=.75\textwidth]{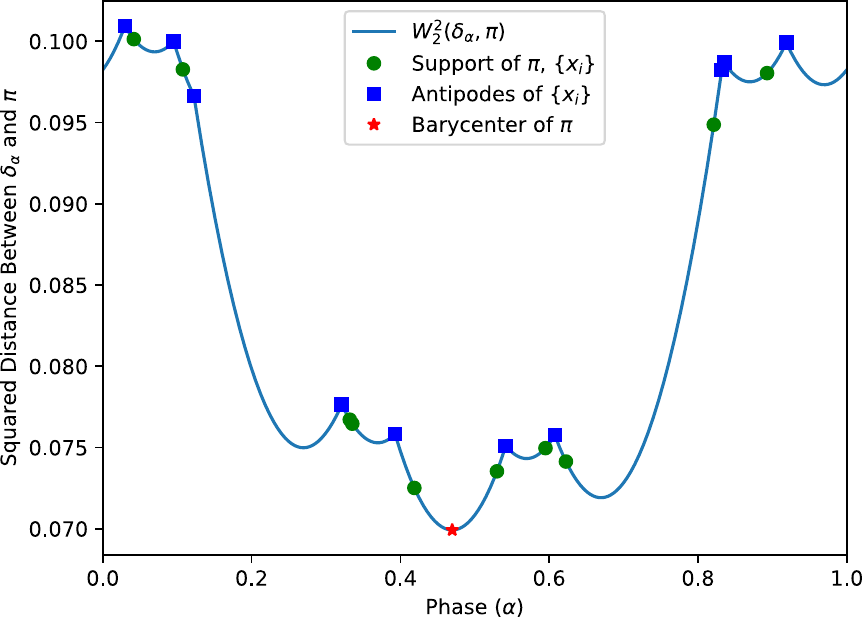}
\caption{The squared Wasserstein-2 distance between an empirical distribution, $\pi = \frac{1}{10}\sum_{i=1}^{10}\delta_{x_i}$, $x_i \in [0,1)$ (blue dots) and $\delta_{\alpha}$ as a function of $\alpha \in [0,1)$. The squared distance is convex between antipodal points $a(x_i) = (x_i+1/2) \mod 1$ (green squares). The barycenter of $\pi$ (red star) is the global minimum of the squared distance function, found by evaluating $W^2_{2}(\pi,\delta_{\alpha_i})$ at candidate barycenters $\{\alpha_i\}$ determined by Algorithm~\ref{alg:empirical_circle_candidate_barycenters}. All points in [0,1) are plotted on the graph of the distance function to aid visualization.}
\label{supp_fig:w22_circle_example}
\end{figure}


\newpage

\section{Proof of Proposition \ref{prop:complexity_empirical_circle}}
\label{app:proof_complexity_empirical_circle}
\renewcommand{\thefigure}{H.\arabic{figure}}
\renewcommand{\theHfigure}{H.\arabic{figure}}
\renewcommand{\thetable}{H.\arabic{table}}
\renewcommand{\theHtable}{H.\arabic{table}}
\renewcommand{\thealgorithm}{H.\arabic{algorithm}}
\renewcommand{\theHalgorithm}{H.\arabic{algorithm}}
\setcounter{figure}{0}
\setcounter{table}{0}
\setcounter{algorithm}{0}

\empiricalcirclealgo*

\begin{proof}
Recall that in this space
\begin{align}
    W_2^2(\delta_{\alpha},\pi) & = \frac{1}{n} \sum_{i=1}^n d(x_i,\alpha)^2 \nonumber \\
    & = \frac{1}{n} \sum_{i=1}^n \min(|x_i-\alpha|, 1-|x_i-\alpha|)^2 \nonumber \\
    & = \frac{1}{n} \sum_{i=1}^n \begin{cases} |x_i-\alpha|^2, & \text{ if } |x_i-\alpha| \leq 1/2 \\ 
    (1-|x_i-\alpha|)^2, & \text{ if } |x_i-\alpha| > 1/2.
    \end{cases}
    \label{appeq:wasserstein2_circle_formula}
\end{align} 
Let the function $f_{x_i}(\alpha) := d(x_i,\alpha)^2$ measure the square Wasserstein-2 cost of moving the delta distribution at $x_i$ to a delta distribution at $\alpha$. To compute the generalized barycenter of $\pi$, we first observe that for each $x_i$, the function $f_{x_i}(\alpha)$ is differentiable on the intervals $[x_i, a(x_i))$ and $(a(x_i), x_i]$, where $a(x_i) = (x_i+1/2) \mod 1$ defines the antipodal point of $x_i$, and intervals are assumed to wrap modulo 1. More precisely, the derivative of the $i$-th term in Eq.~\eqref{appeq:wasserstein2_circle_formula} is 
$$
\frac{df_{x_i}}{d\alpha}(\alpha) = 
\begin{cases}
2(\alpha-x_i), & \text{ if } |x_i-\alpha| < 1/2 \\ 2(\text{sign}(x_i-\alpha)-(x_i-\alpha)), & \text{ if } |x_i-\alpha| > 1/2. 
\end{cases}
$$
By computing the second derivative away from $a(x_i)$, $d^2f_{x_i}/d\alpha^2(\alpha) = 2$, we see that $f_{x_i}$ is convex with respect to $\alpha$ on each $[x_i, a(x_i))$ and $(a(x_i),x_i]$. Thus, on each interval between consecutive antipodal points, $W_2^2(\pi, \delta_{\alpha})$ is also convex, as it is the sum of convex functions. 
    
    Let $x, y, z \in [0,1)$, and assume without loss of generality that $a(x) \leq a(y)$.  If $a(z) \notin (a(x), a(y))$, then $|z-\alpha| < 1/2$ for all $\alpha \in (a(x),a(y))$ or $|z-\alpha| > 1/2$ for all $\alpha \in (a(x),a(y))$. 
    If there exist points $\alpha_{<},\alpha_{>} \in (a(x), a(y))$ such that $|z-\alpha_{<}| < 1/2$ and $|z-\alpha_{>}| > 1/2$, then there must exist $\alpha_{=} \in (a(x), a(y))$ such that $|z-\alpha_{=}| = 1/2$ by the intermediate value theorem. But then $a(z) = \alpha_{=} \in (a(x), a(y))$, a contradiction. Thus the $k$-th term in $dW_2^2(\pi, \delta_{\alpha})/d\alpha$ will be defined either by $2(\text{sign}(x_k-\alpha)-(x_k-\alpha))$ or 
    $2(\alpha-x_k)$ for all $\alpha \in (a(x_i), a(x_j))$ and for every $i,j,k$. Furthermore, if $x_k \in (a(x_i),a(x_j))$, then $|x_k-\alpha| < 1/2$ for all $\alpha \in (a(x_i),a(x_j))$ since otherwise, again, this would imply $a(x_k) \in (a(x_i),a(x_j))$. The same reasoning applies to the intervals $(0,a(x_i))$ and $(a(x_j),1)$, if no other antipodal points are in those intervals. Thus, the functional form of the $k$-th term in the derivative $dW_2^2(\pi, \delta_{\alpha})/d\alpha$ can be determined on each interval in the antipodal-point-partition of [0,1) simply by checking the proximity of $x_k$ to a particular $\alpha \in (a(x_i),a(x_j))$ and the $\text{sign}(x_k-a(x_i)).$ For example, the $k$-th term of the derivative will be
    \setlength{\abovedisplayskip}{5pt}
    \setlength{\belowdisplayskip}{5pt}
    \begin{align*}
    2(\alpha-x_k), & \text{ if } |x_k - (a(x_i)+a(x_j))/2| \leq 1/2 \\
    2(-1+\alpha-x_k), & \text{ if } |x_k - (a(x_i)+a(x_j))/2| > 1/2 \text{ and } x_k \leq a(x_i) \\ 
     2(1+\alpha-x_k), & \text{ if } |x_k - (a(x_i)+a(x_j))/2| > 1/2 \text{ and } x_k > a(x_i).
    \end{align*}
    
    By convexity, a global minimum on each antipodal point interval of $W_2^2(\pi, \delta_{\alpha})$ will either have vanishing derivative or will be an antipodal point. So candidates for the global minimum can be determined by solving  $dW_2^2(\pi, \delta_{\alpha})/d\alpha$=0 for $\alpha$. By the above discussion this can be done efficiently and exactly since $dW_2^2(\pi, \delta_{\alpha})/d\alpha$ is a sum of $n$ terms of the form $2(\alpha-x_k)$, $2(-1+\alpha-x_k)$, or $2(1+\alpha-x_k)$. The equation will thus always have a unique solution of the form $\alpha = x/n$, where $x$ is the sum of all $x_k$ and the difference between the number of points $x_k \leq a(x_i)$ and the number of points $x_k > a(x_i)$, for those $x_k$ whose distance to the interval is not less than 1/2. Note that a critical value may fall outside its partition interval in which it is a proposed minimum. These are not valid candidate solutions and so need not be evaluated as a potential barycenters. In these cases, the minimal value of $W_2^2(\pi, \delta_{\alpha})$ over a partition interval will occur at one of the endpoints of the interval. Pseudo-code of the algorithm determining the candidate barycenters is provided in Algorithm \ref{alg:empirical_circle_candidate_barycenters}.
    
    Thus the exact solution can be found in at most $2n+1$ evaluations of \eqref{eq:wasserstein2_circle_formula}: there are at most $n+1$ valid candidate solutions to $dW_2^2(\pi, \delta_{\alpha})/d\alpha=0$, and at most $n$ antipodal interval end points; $W_2^2(\pi, \delta_{\alpha})$(0) = $W_2^2(\pi, \delta_{\alpha})$(1), so only one of these needs to be checked as a global minimum.  Empirical evidence suggests that for uniformly random data, the expected number of candidate solutions that fall within their defining intervals grows close to $\sqrt{n}$ (Fig.~\ref{supp_fig:valid_candidate_sols}), though the expected total number of evaluations of $W_2^2(\pi,\delta_\alpha)$ remains $\mathcal{O}(n)$.
\end{proof}

\begin{algorithm}[H]
\caption{Computes candidate barycenters of an empirical distribution, $\pi = \sum_{i=1}^{n}\delta_{x_i}/n$ on the circle, $[0,1) \cong S^1.$}
\label{alg:empirical_circle_candidate_barycenters}
\begin{algorithmic}[1]
\Require $\bm{x}$ \Comment{Support of $\pi$: A length-$n$ ARRAY of FLOATS in [0,1).}
\Ensure \textbf{candidateBarycenters}
\State $\bm{ax} \gets [0,1]$
\State $\bm{ax} \gets \bm{ax}.\text{APPEND}(({\bm x} + \frac{1}{2}) \mod 1)$
\State $\bm{ax} \gets \text{SORT}(\text{UNIQUE}(\bm{ax}))$
\Comment{Sorted antipodal point partition of [0,1).}
\State $M \gets \text{length}(\bm{ax}$)
\State $\textbf{candidateBarycenters} \gets \text{ARRAY}[1:M-1]$ of FLOATS
\For{$j \gets 1$ to $M-2$}
        \State $\text{s} \gets \bm{ax}[j]$ \Comment{Partition interval endpoints.}
        \State $\text{e} \gets \bm{ax}[j+1]$
    
    \State $\text{candidate} \gets \text{SUM}(\bm{x})$
    \State $\text{candidate} \gets \text{candidate} + \text{COUNT}(\bm{x} - (s + e)/2 \geq 0.5 \text{ AND } \bm{x} \leq s)$
    \State $\text{candidate} \gets \text{candidate} - \text{COUNT}(\bm{x} - (s+e)/2 \geq 0.5 \text{ AND } \bm{x} > \text{s})$
   \State $\text{candidate} \gets (\text{candidate}/n) \mod 1$
    \If{$\text{candidate} < s \text{ OR } \text{candidate} > e$} \Comment{Invalid candidate.}
        \State $\textbf{candidateBarycenters}[j] \gets \text{NaN}$
    \Else
    \State $\textbf{candidateBarycenters}[j] \gets \text{candidate}$
    \EndIf
\EndFor
$\textbf{candidateBarycenters} \gets \textbf{candidateBarycenters}.\text{APPEND}(\bm{ax})$
\end{algorithmic}
\end{algorithm}


\newpage

\section{Empirical growth of the number of valid candidate barycenters of uniformly random empirical distributions on the circle}
\label{app:valid_cand_barycenters}
\renewcommand{\thefigure}{I.\arabic{figure}}
\renewcommand{\theHfigure}{I.\arabic{figure}}
\renewcommand{\thetable}{I.\arabic{table}}
\renewcommand{\theHtable}{I.\arabic{table}}
\setcounter{figure}{0}
\setcounter{table}{0}

\begin{figure}[!ht]
\includegraphics[width=.75\textwidth]{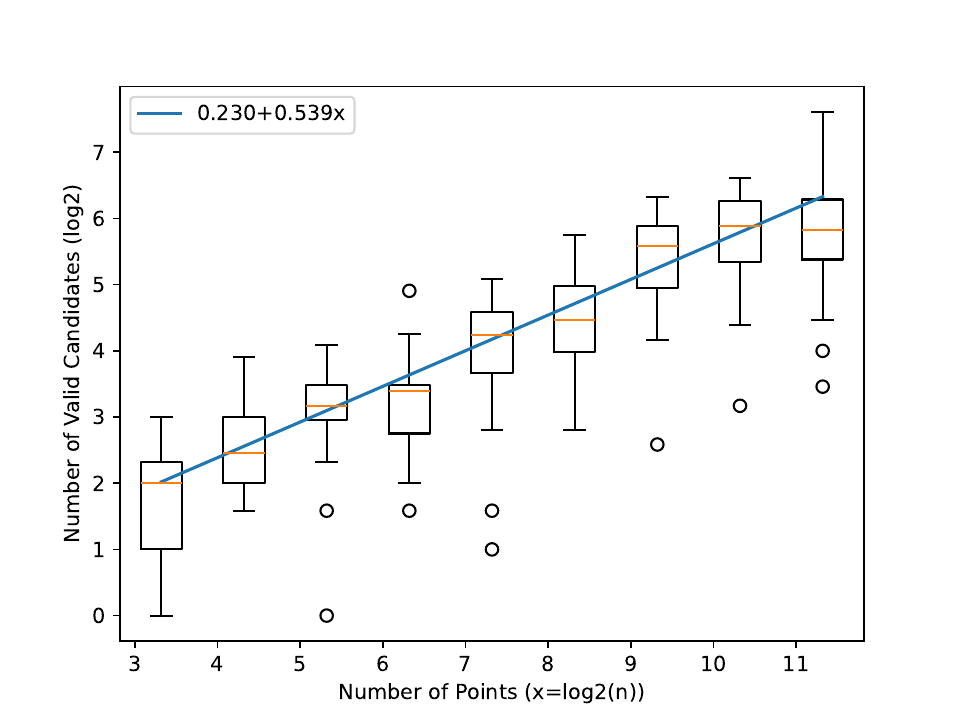}
\caption{Empirical growth of the number of valid candidate barycenters of uniformly random empirical distributions on the circle.}
\label{supp_fig:valid_candidate_sols}
\end{figure}


\newpage
\section{\textit{P. falciparum} strain-specific morphological stage state spaces}
\label{supp:pfalc_stage_spaces}
\renewcommand{\thefigure}{J.\arabic{figure}}
\renewcommand{\theHfigure}{J.\arabic{figure}}
\renewcommand{\thetable}{J.\arabic{table}}
\renewcommand{\theHtable}{J.\arabic{table}}
\setcounter{figure}{0}
\setcounter{table}{0}

\begin{table}[!ht]
\centering
{\setlength{\tabcolsep}{10pt}%
\begin{tabular}{p{1.3cm}p{2.4cm}p{2.4cm}p{1.0cm}}\toprule
\textit{P. falc.} Strain & Ring-Troph. Threshold & Troph.-Schiz. Threshold & Period (hours)\\
\midrule
3D7 & 0.312 & 0.602 & 39 \\
D6 & 0.478 & 0.629 & 36 \\
FVO & 0.571 & 0.797 & 43 \\
SA250 & 0.419 & 0.680 & 54 \\ 
HB3 & 0.390  & 0.657 & 46 \\
\bottomrule
\end{tabular}}
\caption{Model-inferred phases determining transitions between \textit{P. falciparum} discrete morphological stages and strain-specific developmental cycle periods, as reported in \cite{science}.}
\label{supp_tab:pfalc_stage_thresholds}
\end{table}

The thresholds in Table~\ref{supp_tab:pfalc_stage_thresholds} model strain-specific subintervals encoding IEC phases (fraction of the IEC) corresponding to ring stage, $[0, \theta^{\text{strain}}_{\text{rt}})$, trophozoite stage $[\theta^{\text{strain}}_{\text{rt}}, \theta^{\text{strain}}_{\text{ts}})$, and schizont stage $[\theta^{\text{strain}}_{\text{ts}}, 1)$. Bursting and reinvasion is assumed to occur at phase 0, and phases are taken to be in [0,1). So, for example, a 3D7 parasite with an IEC phase in [0,0.312) is assumed to be, morphologically, in ring stage, while an HB3 parasite with an cycle phase in [0.390,0.657) is in the trophozoite stage. These stage intervals were further refined into equal-length early/late, or early/middle/late sub-stages to define state spaces with 6 or 9 states respectively for each strain. Synchronies over time as measured in the circle and in each discretization of the circle are shown in Fig.~\ref{fig:science_reanalysis2}. The distance matrices $D_{\text{strain},6}$ and $D_{\text{strain},9}$ for these discrete circular state spaces, as defined in Example \ref{exp:discrete_circle_general}, are 


\[
D_{\text{3D7},6}= \begin{bNiceMatrix}[first-col,first-row]   & 0 & 1 & 2 & 3 & 4 & 5\\
 0 & 0 & 0.156 & 0.306 & 0.451 & 0.377 & 0.177\\ 
 1 & 0.156 & 0 & 0.150 & 0.295 & 0.467 & 0.334\\ 
 2 & 0.306 & 0.150 & 0 & 0.145 & 0.317 & 0.484\\ 
 3 & 0.451 & 0.295 & 0.145 & 0 & 0.172 & 0.371\\ 
 4 & 0.376 & 0.467 & 0.317 & 0.172 & 0 & 0.199\\ 
 5 & 0.177 & 0.334 & 0.484 & 0.371 & 0.199 & 0 
\end{bNiceMatrix}
\]

\[
D_{\text{3D7},9} = \begin{bNiceMatrix}[first-col,first-row]   & 0 & 1 & 2 & 3 & 4 & 5 & 6 & 7 & 8\\
 0 & 0 & 0.104 & 0.208 & 0.308 & 0.405 & 0.498 & 0.384 & 0.251 & 0.118\\
 1 & 0.104 & 0 & 0.104 & 0.204 & 0.301 & 0.398 & 0.488 & 0.355 & 0.222\\ 
 2 & 0.208 & 0.104 & 0 & 0.100 & 0.197 & 0.294 & 0.408 & 0.459 & 0.326\\ 
 3 & 0.308 & 0.204 & 0.100 & 0 & 0.097 & 0.193 & 0.308 & 0.441 & 0.427\\ 
 4 & 0.405 & 0.301 & 0.197 & 0.097 & 0 & 0.097 & 0.211 & 0.344 & 0.477\\ 
 5 & 0.498 & 0.398 & 0.294 & 0.193 & 0.097 & 0 & 0.115 & 0.247 & 0.380\\ 
 6 & 0.384 & 0.488 & 0.408 & 0.308 & 0.211 & 0.115 & 0 & 0.133 & 0.265\\ 
 7 & 0.251 & 0.355 & 0.459 & 0.441 & 0.344 & 0.247 & 0.133 & 0 & 0.133\\ 
 8 & 0.118 & 0.222 & 0.326 & 0.427 & 0.477 & 0.380 & 0.265 & 0.133 & 0 
\end{bNiceMatrix}
\]

\[
D_{\text{D6},6} = \begin{bNiceMatrix}[first-col,first-row]   & 0 & 1 & 2 & 3 & 4 & 5\\
 0 & 0 & 0.239 & 0.396 & 0.472 & 0.398 & 0.212\\ 
 1 & 0.239 & 0 & 0.157 & 0.233 & 0.363 & 0.451\\ 
 2 & 0.396 & 0.157 & 0 & 0.076 & 0.206 & 0.392\\ 
 3 & 0.472 & 0.233 & 0.076 & 0 & 0.131 & 0.316\\ 
 4 & 0.398 & 0.363 & 0.206 & 0.131 & 0 & 0.185\\ 
 5 & 0.212 & 0.451 & 0.392 & 0.316 & 0.185 & 0\\ 
\end{bNiceMatrix}
\]

\[ 
D_{\text{D6},9} = \begin{bNiceMatrix}[first-col,first-row]   & 0 & 1 & 2 & 3 & 4 & 5 & 6 & 7 & 8\\ 
 0 & 0 & 0.159 & 0.319 & 0.423 & 0.474 & 0.476 & 0.389 & 0.265 & 0.141\\
 1 & 0.159 & 0 & 0.159 & 0.264 & 0.315 & 0.365 & 0.452 & 0.424 & 0.301\\ 
 2 & 0.319 & 0.159 & 0 & 0.105 & 0.155 & 0.206 & 0.293 & 0.416 & 0.460\\ 
 3 & 0.423 & 0.264 & 0.105 & 0 & 0.050 & 0.101 & 0.188 & 0.311 & 0.435\\ 
 4 & 0.474 & 0.315 & 0.155 & 0.050 & 0 & 0.050 & 0.137 & 0.261 & 0.385\\ 
 5 & 0.476 & 0.365 & 0.206 & 0.101 & 0.050 & 0 & 0.087 & 0.211 & 0.334\\ 
 6 & 0.389 & 0.452 & 0.293 & 0.188 & 0.137 & 0.087 & 0 & 0.124 & 0.247\\ 
 7 & 0.265 & 0.424 & 0.416 & 0.311 & 0.261 & 0.211 & 0.124 & 0 & 0.124\\ 
 8 & 0.142 & 0.301 & 0.460 & 0.435 & 0.385 & 0.334 & 0.247 & 0.124 & 0
\end{bNiceMatrix}
\]

\[
D_{\text{FVO},6} = \begin{bNiceMatrix}[first-col,first-row]   & 0 & 1 & 2 & 3 & 4 & 5\\ 
 0 & 0 & 0.285 & 0.485 & 0.402 & 0.295 & 0.194\\ 
 1 & 0.285 & 0 & 0.199 & 0.312 & 0.420 & 0.479\\ 
 2 & 0.485 & 0.199 & 0 & 0.113 & 0.220 & 0.322\\ 
 3 & 0.402 & 0.312 & 0.113 & 0 & 0.107 & 0.209\\ 
 4 & 0.295 & 0.420 & 0.220 & 0.107 & 0 & 0.101\\ 
 5 & 0.194 & 0.479 & 0.322 & 0.209 & 0.102 & 0
\end{bNiceMatrix}
\]

\[
D_{\text{FVO},9} = \begin{bNiceMatrix}[first-col,first-row]   & 0 & 1 & 2 & 3 & 4 & 5 & 6 & 7 & 8\\
 0 & 0 & 0.190 & 0.381 & 0.487 & 0.411 & 0.336 & 0.264 & 0.197 & 0.129\\ 
 1 & 0.190 & 0 & 0.190 & 0.323 & 0.398 & 0.474 & 0.455 & 0.387 & 0.319\\ 
 2 & 0.381 & 0.190 & 0 & 0.133 & 0.208 & 0.284 & 0.355 & 0.423 & 0.490\\ 
 3 & 0.486 & 0.323 & 0.133 & 0 & 0.075 & 0.151 & 0.222 & 0.290 & 0.358\\ 
 4 & 0.411 & 0.398 & 0.208 & 0.075 & 0 & 0.075 & 0.147 & 0.215 & 0.282\\ 
 5 & 0.336 & 0.474 & 0.284 & 0.151 & 0.075 & 0 & 0.072 & 0.139 & 0.207\\ 
 6 & 0.264 & 0.455 & 0.355 & 0.222 & 0.147 & 0.072 & 0 & 0.068 & 0.135\\ 
 7 & 0.197 & 0.387 & 0.423 & 0.290 & 0.215 & 0.139 & 0.068 & 0 & 0.068\\ 
 8 & 0.129 & 0.319 & 0.490 & 0.358 & 0.282 & 0.207 & 0.135 & 0.068 & 0 
\end{bNiceMatrix}
\]

\[
D_{\text{SA250},6} = \begin{bNiceMatrix}[first-col,first-row]   & 0 & 1 & 2 & 3 & 4 & 5\\
 0 & 0 & 0.209 & 0.380 & 0.490 & 0.345 & 0.185\\
 1 & 0.209 & 0 & 0.170 & 0.300 & 0.446 & 0.394\\
 2 & 0.379 & 0.170 & 0 & 0.131 & 0.276 & 0.436\\
 3 & 0.490 & 0.300 & 0.131 & 0 & 0.145 & 0.305\\
 4 & 0.345 & 0.446 & 0.276 & 0.145 & 0 & 0.160\\
 5 & 0.185 & 0.394 & 0.436 & 0.305 & 0.160 & 0
\end{bNiceMatrix}
\]

\[
D_{\text{SA250},9} = \begin{bNiceMatrix}[first-col,first-row]   & 0 & 1 & 2 & 3 & 4 & 5 & 6 & 7 & 8\\
 0 & 0 & 0.140 & 0.279 & 0.393 & 0.480 & 0.433 & 0.337 & 0.230 & 0.123\\
 1 & 0.140 & 0 & 0.140 & 0.253 & 0.340 & 0.427 & 0.476 & 0.369 & 0.263\\
 2 & 0.279 & 0.140 & 0 & 0.113 & 0.200 & 0.287 & 0.384 & 0.491 & 0.402\\
 3 & 0.393 & 0.253 & 0.113 & 0 & 0.087 & 0.174 & 0.271 & 0.378 & 0.484\\
 4 & 0.480 & 0.340 & 0.200 & 0.087 & 0 & 0.087 & 0.184 & 0.291 & 0.397\\
 5 & 0.433 & 0.427 & 0.287 & 0.174 & 0.087 & 0 & 0.097 & 0.204 & 0.310\\
 6 & 0.336 & 0.476 & 0.384 & 0.271 & 0.184 & 0.097 & 0 & 0.107 & 0.213\\
 7 & 0.230 & 0.369 & 0.491 & 0.378 & 0.290 & 0.204 & 0.107 & 0 & 0.107\\
 8 & 0.123 & 0.263 & 0.403 & 0.484 & 0.397 & 0.310 & 0.213 & 0.107 & 0
\end{bNiceMatrix}
\]

\[
D_{\text{HB3},6} = \begin{bNiceMatrix}[first-col,first-row]   
& 0 & 1 & 2 & 3 & 4 & 5\\
 0 & 0 & 0.195 & 0.359 & 0.493 & 0.355 & 0.183\\
 1 & 0.195 & 0 & 0.164 & 0.298 & 0.450 & 0.378\\
 2 & 0.359 & 0.164 & 0 & 0.134 & 0.286 & 0.457\\
 3 & 0.493 & 0.298 & 0.134 & 0 & 0.152 & 0.324\\
 4 & 0.355 & 0.450 & 0.286 & 0.152 & 0 & 0.171\\
 5 & 0.183 & 0.378 & 0.457 & 0.324 & 0.171 & 0
\end{bNiceMatrix}
\]

\[
D_{\text{HB3},9} = \begin{bNiceMatrix}[first-col,first-row]   
& 0 & 1 & 2 & 3 & 4 & 5 & 6 & 7 & 8\\
 0 & 0 & 0.130 & 0.260 & 0.369 & 0.459 & 0.452 & 0.351 & 0.236 & 0.122\\
 1 & 0.130 & 0 & 0.130 & 0.239 & 0.329 & 0.418 & 0.481 & 0.367 & 0.252\\
 2 & 0.260 & 0.130 & 0 & 0.110 & 0.199 & 0.288 & 0.389 & 0.496 & 0.382\\
 3 & 0.369 & 0.239 & 0.110 & 0 & 0.089 & 0.178 & 0.280 & 0.394 & 0.492\\
 4 & 0.459 & 0.329 & 0.199 & 0.089 & 0 & 0.089 & 0.191 & 0.305 & 0.419\\
 5 & 0.452 & 0.418 & 0.288 & 0.178 & 0.089 & 0 & 0.102 & 0.216 & 0.330\\
 6 & 0.351 & 0.481 & 0.389 & 0.280 & 0.191 & 0.102 & 0 & 0.114 & 0.229\\
 7 & 0.236 & 0.366 & 0.496 & 0.394 & 0.305 & 0.216 & 0.114 & 0 & 0.114\\
 8 & 0.122 & 0.252 & 0.382 & 0.492 & 0.419 & 0.330 & 0.229 & 0.114 & 0
\end{bNiceMatrix}
\]

\end{document}